\theoremstyle{plain}\newtheorem{theorem}{Theorem}
\theoremstyle{plain}\newtheorem{definition}[theorem]{Definition}
\theoremstyle{plain}\newtheorem{example}[theorem]{Example}
\theoremstyle{plain}\newtheorem{lemma}[theorem]{Lemma}
\theoremstyle{plain}
\theoremstyle{plain}\newtheorem{proposition}[theorem]{Proposition}
\newcommand{\nesearrow}{\nearrow\!\!\!\!\!\!\!\!\!\searrow}
\newcommand{\trs}[1]{\begin{aligned}#1\end{aligned}}
\newcommand{\tb}{\mbox{$\mathsf{tb}$}}
\newcommand{\tup}[1]{\left\langle{#1}\right\rangle}
\newcommand{\join}{\downarrow}
\newcommand{\ra}{\rightarrow}
\newcommand{\ras}{\rightarrow^\ast}
\newcommand{\rap}{\rightarrow^+} 
\newcommand{\la}  {\leftarrow}
\newcommand{\las} {\la^*}
\newcommand{\La}  {\Leftarrow}
\newcommand{\poss}{{\mathcal P}\hspace{-0.1em}os}
\newcommand{\vars}{{\mathcal V}\hspace{-0.1em}ar}
\newcommand{\F}{{\mathcal F}}
\newcommand{\V}{{\mathcal V}}
\newcommand{\T}{{\mathcal T}}
\newcommand{\R}{{\mathcal R}}
\newcommand{\U}{{\mathbb U}}
\newcommand{\usim}{{\U_{sim}}}
\newcommand{\useq}{{\U_{seq}}}
\newcommand{\uopt}{{\U_{opt}}}
\newcommand{\ie}{i.e.~}
\newcommand{\eg}{e.g.~}
\newcommand{\wrt}{w.r.t.~}
\newcommand{\set}[1]{\left\{#1\right\}}
\begin{document}

\title{Confluence of Conditional Term Rewrite Systems via Transformations}  

\author{Karl Gmeiner
\institute{Department of Computer Science, UAS Technikum Wien, Austria}
\email{gmeiner@technikum-wien.at}}

\def\titlerunning{Confluence of CTRSs via Transformations}

\def\authorrunning{K.\ Gmeiner}

\maketitle

\begin{abstract}
Conditional term rewriting is an intuitive yet complex extension of
term rewriting. In order to benefit from the simpler framework of
unconditional rewriting,
transformations have been defined to eliminate the conditions
of conditional term rewrite systems. 

Recent results provide confluence criteria
for conditional term rewrite systems via transformations, yet they are restricted
to CTRSs with certain syntactic
properties like weak left-linearity. 
These syntactic properties imply that
the transformations are sound for the given
CTRS.

This paper shows how to use transformations to prove confluence of operationally 
terminating, right-stable deterministic conditional
term rewrite systems without
the necessity of soundness restrictions. For this purpose,
it is shown that certain rewrite strategies, in particular almost
U-eagerness and innermost rewriting, always imply soundness.
\end{abstract}

\section{Introduction}


\subsection{Background and Motivation}

Conditional term rewrite systems (CTRSs) are term rewrite systems in which
rewrite rules may be bound to certain conditions. Such systems are a widely accepted
extension of unconditional term rewrite 
systems (TRSs) that has been investigated for decades but they are 
more complex than unconditional TRSs. 
Several properties of unconditional rewriting are not satisfied anymore or change their
intuitive meaning
and many criteria for TRSs cannot be applied.
Thus, there have been efforts to develop transformations
that map CTRSs into unconditional TRSs, for instance in 
\cite{jcss86-bergstra-klop, ctrs87-lncs88-giovanetti-moiso, ctrs94-lncs95-hintermeier, jsc99-viry, ppdp03-antoy-brassel-hanus}.

Transformations are supposed to simplify the original CTRS by eliminating the conditions. This way, 
properties of the CTRS can be proved by using the simpler, unconditional TRS.
Yet, for this purpose one must ensure that the rewrite relation
of the transformed TRS does not give rise to rewrite sequences that
are not possible in the original CTRS, a property called soundness.

The aim of this paper is to prove that if the transformed TRS is 
confluent, then the CTRS is also confluent,
without the necessity to also 
prove soundness. This main result is applicable
to right-stable
deterministic CTRSs that are transformed 
into terminating TRSs and it
significantly improves other, similar confluence results like the ones in 
\cite{iwc2013} and \cite{iwc2014} because there are no syntactic restrictions required
that imply soundness (in particular weak left-linearity). In fact, it also holds
for CTRSs for which the used transformation is
unsound. This result leads to a new method to prove confluence of 
CTRSs that can be easily automated and it leaves space for further improvements.


\subsection{Overview and Outline}

In order to prove properties of CTRSs using transformations, one must prove
that the transformation is suitable for the given purpose. 
\cite{alp96-marchiori} introduces the notions of \emph{soundness} and \emph{completeness}
of a certain class of transformations, so-called unravelings.
Informally, soundness means that if the transformed TRS gives rise to a rewrite 
sequence in the transformed TRS then this rewrite sequence is also possible in the 
original CTRS. Completeness is the opposite of soundness, \ie that a rewrite sequence
in the CTRS also exists in the transformed TRS.


Completeness is usually implied by the structure of transformations but 
soundness is more difficult to prove and not satisfied in general.
Yet, soundness is needed to prove properties like non-termination or confluence. 
In many papers it is proved that certain syntactic properties like (weak) left-linearity imply soundness for
a certain transformation (see \eg \cite{alp96-marchiori}). 


Soundness and confluence of the transformed system
implies confluence of the original CTRS (see \eg \cite{iwc2013}), yet
there is not yet a positive or a negative result whether soundness is
essential (although confluence of the transformed CTRS does not imply soundness 
which was shown in \cite{rta12-gmeiner-et-al}). This paper will answer this question
by first showing that innermost derivations are always
sound and then show that this in fact implies confluence if
the transformed TRS is terminating.

The following section recalls some basics
and notions of (conditional) term rewriting.
Section~\ref{sec_unravel} introduces the 
most common unravelings of CTRSs. 
In Section~\ref{sec_unsoundness} 
a rewrite strategy called \emph{almost U-eager derivations} is introduced and it is
proved that it implies soundness. Based on this, further results are shown,
in particular soundness of innermost rewrite sequences.
These results are used in 
Section~\ref{sec_conf} to
prove confluence of CTRSs. Finally, the results are
summarized and similar results in the literature and possible perspectives are discussed.

\section{Preliminaries}\label{sec_prelim}

This paper follows basic 
notions and notations as they are defined in \cite{BaNi98} and \cite{book02-ohlebusch}. Basic knowledge of (conditional) term rewriting is assumed. Some less common 
notions are recalled in the following.

The set of all terms over a signature $\F$ and an infinite but countable set
of variables $\V$ is denoted as $\T(\F, \V)$. In the following $\T$ is used
if $\F$ and $\V$ are clear from context. The set of variables
in a term $s$ is $\vars(s)$. For a set of variables $X$, 
$\vec{X}$ denotes
the sequence of variables in $X$ in some arbitrary but fixed order.
The set of positions of a term $s$ is denoted as $\poss(s)$,
$s|_p$ is the subterm of $s$ at position $p$ and
$s[t]_p$ represents the term $s$ after inserting the term $t$ at position $p$.
If $p \le q$ ($p < q$), then $q$ is below (strictly below) $p$.
Otherwise $q$ is above $p$ ($p \ge q$) or parallel to $p$ ($p \parallel q$). 

A substitution $\sigma$ is a mapping from
variables to terms that is implicitly extended to terms. 
In the following, the common
postfix notation $s\sigma$ is used for the term $s$ with the substitution $\sigma$
applied. This notation is extended to substitutions, \ie 
$\sigma \tau$ corresponds to $\sigma \tau (x) = \tau(\sigma(x))$.

A rewrite rule $\alpha$ is a pair of two terms $(l, r)$, 
denoted as $l \ra r$, where $\vars(r) \subseteq \vars(l)$. 
A term rewrite system (TRS) is a pair $\R = (\F, R)$ of a signature and a set of
rules. In the following, the signature will often be left implicit
and slightly abusing notation $\R$ will be used instead of $R$. 

A rewrite
step from a term $s$ to a term $t$ at a position $p$ using a rule $\alpha$
is denoted as $s \ra_{p, \alpha, \R} t$. Some
labels are skipped if they are clear from context or irrelevant.
A single rewrite step is written as $\ra$, the transitive closure is $\rap$,
the reflexive and transitive closure is $\ras$. $\la$ ($\las$) is the 
inverse of $\ra$ ($\ras$) and $\leftrightarrow$ ($\leftrightarrow^*$) is 
$\la \cup \ra$ ($(\la \cup \ra)^*$). A rewrite sequence $u \ras_\R v$ in some TRS $\R$ 
is normalizing if $v$ is a normal form in $\R$.

The set of \emph{one-step descendants} $q\backslash A$ of a position $q$ in a term $s$
w.r.t.~the rewrite step $A: s \ra_{p, l \ra r} t$ is the set of positions
\[
	q \backslash A = \begin{cases}
	\set{q} & \text{if } q \le p \text{ or } p \parallel q \\
	\set{p.q'.q'' \mid r|_{q'} = l|_{p'}} & \text{if } l|_{p'} \text{ is a variable and } q = p.p'.q'' \\ 
	\emptyset & \text{otherwise}
	\end{cases}
\] 
The one-step descendant relation is defined as $\set{(p, q) \mid p \in q \backslash A}$.
The descendant relation is the reflexive, transitive closure of the one-step descendant 
relation, extended to rewrite sequences. The ancestor relation is the inverse of the descendant relation.
By slight abuse of terminology a term $t|_{q'}$ will be referred to as the (one-step)
descendant of a term $s|_q$ if $q'$ is a (one-step) descendant of $q$.\footnote{
From this definition it follows that $t|_p$ is a 
one-step descendant of $s|_p$ in a rewrite step $s \ra_{p} t$. 
This case is sometimes excluded from the descendant relation.}

A conditional rule is a triple $(l, r, c)$, usually denoted
as $l \ra r \La c$ where $l, r$ are terms and $c$ is a conjunction
of equations $s_1 = t_1, \ldots, s_k = t_k$. In this paper
we only consider oriented conditional rules in which equality is defined as reducibility $\ras$.
A conditional term rewrite system (CTRS) $\R$ over some signature $\F$ consists of
conditional rules. The underlying TRS $\R_u$
contains the unconditional part of the conditional rules 
$\R_u = \set{ l \ra r \mid l \ra r \La c \in \R}$.

Let $\R_n$ be the following TRSs:
\[
	\begin{aligned}
		\R_0 &= \emptyset \\
		\R_{n + 1} &= 
		\set{ l\sigma \ra r\sigma \mid 
			l \ra r \La c \in \R \text{ and }
		s\sigma \ras_{\R_n} t\sigma \text{ for all } s \ras t \in c }
	\end{aligned}
\]
A CTRS $\R$ gives rise to the rewrite step $u \ra_\R v$ if there is an $n$
such that $u \ra_{\R_n} v$. The minimal such $n$ is the \emph{depth} 
of the rewrite step.

A conditional rule is of type 1 if
there are no extra variables ($\vars(r) \cup \vars(c) \subseteq \vars(l)$).
It is of type 3 if all extra variables occur in the conditions 
($\vars(r) \subseteq \vars(c) \cup \vars(l)$). 
A \emph{normal} conditional rule is an oriented 1-rule in which 
for every condition $s_i \ras t_i$ ($i \in \set{1, \ldots, k}$), 
$t_i$ is a ground normal form \wrt $\R_u$.
A \emph{deterministic} conditional rule is an oriented 3-rule 
$l \ra r \La s_1 \ras t_1, \ldots, s_k \ras t_k$ such that 
$\vars(s_i) \subseteq \vars(l, t_1, \ldots, t_{i-1})$ 
for all $i \in \set{1, \ldots, k}$. 
A CTRS is a deterministic CTRS (DCTRS) if all rules are 
deterministic conditional rules.

A CTRS is \emph{right-stable} if for all conditional rules 
$l \ra r \La s_1 \ras t_1, \ldots, s_k \ras t_k$,
$t_i$ is either a linear constructor term or
a ground irreducible term (w.r.t.~$\R_u$), and
$\vars(t_i) \cap \vars(l, s_1, t_1, \ldots, s_{i-1}, t_{i-1}, s_i) = \emptyset$ 
for all $i \in \set{1, \ldots, k}$. In the following only right-stable DCTRSs are considered.

\section{Unravelings}\label{sec_unravel}

Unravelings are a simple class of transformations from CTRSs to TRSs that was
introduced in \cite{alp96-marchiori}. In the same paper Marchiori also
introduces multiple specific unravelings, in particular the simultaneous 
unraveling 
$\usim$ for normal 1-CTRSs. 
This unraveling splits a conditional rule 
$\alpha: l \ra r \La s_1 \ras t_1, \ldots, s_k \ras t_k$ 
into two unconditional rules: 
\[
	\begin{aligned}
	l &\ra U^\alpha(s_1, \ldots, s_k, \overrightarrow{\vars(l)}) \\
	U^\alpha(t_1, \ldots, t_k, \overrightarrow{\vars(l)}) &\ra r
	\end{aligned}
\]

The sequential unraveling that was introduced in \cite{flops99-ohlebusch}
(a similar unraveling was already defined in \cite{techrep97-marchiori})
extends this approach to DCTRSs.

\begin{definition}[sequential unraveling $\useq$ \cite{flops99-ohlebusch}]
Given a deterministic conditional rule 
$\alpha: l \ra r \La s_1 \ras t_1, \ldots, s_k \ras t_k$, $\useq$ translates
the rule into a set of unconditional rules:
\[
	\useq(\alpha) = \set{
		\begin{aligned}
		l &\ra U^\alpha_1(s_1, \vec{X_1}) && \text{(introduction rule)}\\
		U^\alpha_1(t_1, \vec{X_1}) &\ra U^\alpha_2(s_2, \vec{X_2}) && \text{(switch rule)}\\[-0.5em]
		\vdots \qquad & \qquad \vdots && \qquad \vdots \\[-0.5em]
		U^\alpha_{k-1}(t_{k-1}, \vec{X_{k-1}}) &\ra U^\alpha_k(s_k, \vec{X_k}) && \text{(switch rule)}\\
		U^\alpha_k(t_k, \vec{X_k}) &\ra r && \text{(elimination rule)}\\
		\end{aligned}
	}
\]
where $X_i = \vars(l, t_1, \ldots, t_{i-1})$. For an unconditional rule $\alpha$, $\useq(\alpha) = \set{\alpha}$.
The unraveled CTRS $\useq(\R)$ then is defined as $\bigcup_{\alpha \in \R} \useq(\alpha)$.
\end{definition}

In the following, $\useq(\F)$ denotes the signature of the unraveled
TRS $\useq(\R)$. The new function symbols $\useq(\F) \setminus \F$ are
\emph{U-symbols}. Terms rooted by a U-symbol are \emph{U-terms}.
A terms $s$ is a \emph{mixed term} if it contains U-terms ($s \in \T(\useq(\F), \V)$, 
short $\useq(\T)$), otherwise it is an \emph{original term} ($s \in \T$). 
In U-terms of some $\useq(\R)$, the first argument
encodes the \emph{conditional argument} while the other
\emph{variable arguments} contain the variable bindings.

A rewrite step in the transformed TRS in which an introduction
(switch/elimination) rule is applied is an \emph{introduction step}
(\emph{switch step}/\emph{elimination step}).

According to the original definition an unraveling $\U$ is complete, \ie $u \ras_\R v$ implies $u \ras_{\U(\R)} v$. It is sound 
if $u \ras_{\U(\R)} v$ implies $u \ras_\R v$
for all $u, v \in \T$.

The unraveling $\useq$ encodes all variable bindings in its U-terms even if they are 
not used anymore. In \cite{pepm04-duran-lucas-meseguer-marche-urbain} the variable bindings are optimized, leading to the optimized sequential unraveling $\uopt$ 
(\cite{rta05-nishida-sakai-sakabe}). In this unraveling variables are not 
encoded if they are not required in a later condition or
the right-hand side of the conditional rule: 
\[
	\uopt(\alpha) = \set{l \ra U^\alpha_1(s_1, \vec{X_1}), 
	U^\alpha_1(t_1, \vec{X_1}) \ra U^\alpha_2(s_2, \vec{X_2}), \ldots, 
	U^\alpha_k(t_k, \vec{X_k}) \ra r} \]
where 
$X_i = \vars(l, t_1, \ldots, t_{i-1}) \cap \vars(t_{i+1}, s_{i+2} \ldots, s_k, t_k, r)$.

Optimizing the variable bindings in unravelings has advantages
in some cases because less terms have to be considered in proofs. 
In \cite{gmeiner-phd-thesis} 
several soundness results for $\uopt$ are shown, in particular
soundness for U-eager rewrite sequences. 
Formally, a derivation $u_0 \ra_{p_0} u_1 \ra_{p_1} \cdots \ra_{p_{n-1}} u_n$
in some $\U(\R)$ 
is \emph{U-eager} if U-terms are immediately rewritten, i.e., 
$p \le p_i$ for all U-terms $u_i|_p$.

Yet, this optimization has some drawbacks. For instance, two terms
that are not joinable in the original CTRS rewrite to the same mixed term because
a variable binding is erased. Because of this phenomenon, the main result of this paper
does not hold for $\uopt$.

\begin{example}[unsoundness for confluence of the optimized unraveling]\label{ex_unsound_uopt}
	Consider the following DCTRS and its transformed terminating TRS using the optimized
	unraveling:
	\[
	\R = \set{
		\trs{
			a & \ra s(b) \\[-0.5em]
			& \searrow \\[-0.5em]
			& \mathrel{\quad} s(c) \\
			s(x) &\ra A \La B \ras C
		}} \qquad
		\uopt(\R) = \set{
		\trs{
			a & \ra s(b) \\[-0.5em]
			& \searrow \\[-0.5em]
			& \mathrel{\quad} s(c) \\
			s(x) &\ra U^\alpha_1(B) \\
			U^\alpha_1(C) &\ra A
		}}
	\]
	
	$\uopt(\R)$ is confluent since the only critical pair $\tup{s(b), s(c)}$
	gives rise to the common reduct $U^\alpha_1(B)$ and the transformed TRS
	is terminating. Yet,
	$\R$ is not confluent because $a$ rewrites to $s(b)$ and $s(c)$ but the condition
	of the conditional rule is never satisfied so that $s(b)$ and $s(c)$ 
	are irreducible.
\end{example}

Since $\useq$ preserves all variable bindings of the left-hand side of a conditional rule
it is possible to extract these bindings and insert them into the corresponding left-hand side, thus
obtaining the back-translation $\tb$ (defined in \cite{rta10-gmeiner-et-al}, similar mappings 
are used in the proofs in \cite{alp96-marchiori} and \cite{book02-ohlebusch}).

\begin{definition}[back-translation $\tb$]
	Let $\R = (\F, R)$ be a CTRS, then $\tb: \useq(\T) \mapsto \T$ is 
	defined as follows:
	\[
	\tb(s) = \begin{cases}
	s & \text{if } s \text{ is a variable} \\
	f(\tb(s_1), \ldots, \tb(s_k)) & \text{if } s = f(s_1, \ldots, s_k) 
	\text{ and } f \in \F \\
	l\sigma & \text{if } s = U^\alpha_i(w, v_1, \ldots, v_m) \text{ and}\\
	& \alpha = l \ra r \La s_1 \ras t_1, \ldots s_k \ras t_k 
	\end{cases}
	\]
	where $\sigma$ is defined as $x_i\sigma = \tb(v_i)$
	where $\overrightarrow{\vars(l, t_1, \ldots, t_{i-1})} = 
	x_1, \ldots, x_m$.
\end{definition}

In the following $\tb$ will sometimes
be extended to substitutions ($x\,\tb(\sigma) = \tb(x\sigma)$ for $x \in \mathcal{D}om(\sigma)$).
The back translation allows us to define soundness such that it also extends to mixed terms: 
A rewrite sequence $u \ras_\useq(R) v$ ($u \in \T$) is sound if $u \ras_\R \tb(v)$.

\section{Soundness and Completeness of Transformations} \label{sec_unsoundness}

The transformation $\useq$ is not sound for DCTRSs in general. This was first shown
by Marchiori in \cite{alp96-marchiori} using a normal 1-CTRS that consists of multiple non-linear rules. 
For DCTRSs we presented another example in \cite{rta12-gmeiner-et-al}.

\begin{example}[unsoundness \cite{rta12-gmeiner-et-al}]\label{ex_unsound}
Consider the following DCTRS and its unraveling 
\[
	\R = \set{
	\trs{
        a &\to c    \\[-0.5em]
        &\nesearrow     \\[-0.5em]
        b &\to d  \\
        s(c)& \ra t(k) \\[-0.5em]
        & \searrow \\[-0.5em]
        & \mathrel{\quad} t(l) \\
        g(x,x) &\ra h(x,x) \\
		f(x) &\to \tup{x, y} \Leftarrow s(x) \ras t(y)
	}} \qquad
		\useq(\R) = \set{
	\trs{
        a &\to c    \\[-0.5em]
        &\nesearrow     \\[-0.5em]
        b &\to d  \\ 
        s(c)& \ra t(k) \\[-0.5em]
        & \searrow \\[-0.5em]
        & \mathrel{\quad} t(l) \\
		g(x,x) &\to h(x,x) \\
		f(x) &\to U^\alpha_1(s(x), x) \\
		U^\alpha_1(t(y), x) &\to \tup{x, y}
	}}
\]

In $\useq(\R)$, there is the following reduction sequence:
\[ g(f(a), f(b)) \ras g(U^\alpha_1(s(c), d), U^\alpha_1(s(c), d)) \ra h(U^\alpha_1(s(c), d), U^\alpha_1(s(c), d)) 
\ras h(\tup{d, k}, \tup{d, l})
\]

Yet, this derivation is not possible in $\R$ because
there is no common reduct of $f(a)$ and $f(b)$ that rewrites to both,
$\tup{d,k}$ and $\tup{d,l}$.
\end{example}

The CTRSs of Example~\ref{ex_unsound} and the counterexample of \cite{alp96-marchiori} are 
syntactically very complex. Based on this observation it was shown that many syntactic properties imply soundness:
Left-linearity (normal 1-CTRSs \cite{alp96-marchiori}/\cite{book02-ohlebusch}, DCTRSs \cite{rta11-nishida-et-al}),
weak left-linearity, right-linearity (normal 1-CTRSs \cite{rta10-gmeiner-et-al}, DCTRSs \cite{rta12-gmeiner-et-al}),
non-erasingness (normal 1-CTRSs \cite{rta10-gmeiner-et-al}, 2-DCTRSs \cite{rta12-gmeiner-et-al}, counterexample for 3-DCTRSs \cite{rta12-gmeiner-et-al})
and weak right-linearity (DCTRSs \cite{gmeiner-phd-thesis}).

The CTRS of Example~\ref{ex_unsound} is not confluent and in \cite{rta10-gmeiner-et-al} it is shown
that the simultaneous unraveling is sound for confluent normal 1-CTRSs. Yet,
this result does not hold for DCTRSs:

\begin{example}[unsoundness for confluence \cite{rta12-gmeiner-et-al}]\label{ex_unsound_conf}
Let $\R$ be the CTRS of Example~\ref{ex_unsound} and $\R'$ be the CTRS
consisting of the unconditional rules
\[
	\R' = \set{c \ra e \la d, k \ra e \la l, s(e) \ra t(e)}
\]

Then, $\R \cup \R'$ and $\useq(\R \cup \R')$ are confluent,
yet, the argument of Example~\ref{ex_unsound} still holds
so that the reduction sequence $g(f(a),f(b)) \ras h(\tup{d, k}, \tup{d, l})$
in the transformed TRS is still unsound.
Nonetheless, the last term of the unsound derivation can be further
reduced to the irreducible term $h(\tup{e, e}, \tup{e, e})$. The 
derivation $g(f(a),f(b)) \ras h(\tup{e, e}, \tup{e, e})$ is sound.
\end{example}

Although the previous example shows that confluence of the transformed
TRS is not sufficient for soundness, it also shows that (in contrast
to Example~\ref{ex_unsound}) the last term of the unsound derivation
can be further reduced. In fact, all normalizing derivations in confluent DCTRSs 
are sound \cite{rta12-gmeiner-et-al}.


The original definition of unravelings in \cite{alp96-marchiori}
states that an unraveling must be complete and preserve the 
original signature. Based on the definition of the unravelings
it is not surprising that completeness is satisfied in all cases.
In the following the proof of \cite{alp96-marchiori} is adapted to 
$\useq$. The proof will be useful to
motivate a rewrite strategy that implies soundness.

\begin{lemma}[completeness of $\useq$]\label{lm_complete}
	Let $\R$ be an oriented CTRS and $s, t$ be two
	terms such that $s \ra_\R t$, then $s \ra^+_{\useq(\R)} t$.
\end{lemma}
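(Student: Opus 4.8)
The plan is to prove completeness by induction on the depth $n$ of the conditional rewrite step $s \ra_\R t$, combined with a straightforward structural argument once the applied rule is isolated. Recall that $s \ra_\R t$ means there is some $n$ with $s \ra_{\R_n} t$; I would take the \emph{minimal} such $n$ (the depth) as the induction parameter. The base case is $n = 0$ (or $n=1$, depending on convention): if the rule applied is unconditional, then $\useq$ leaves it unchanged, so $s \ra_{\useq(\R)} t$ in a single step and we are done immediately.

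For the inductive step, suppose $s \ra_{\R_{n+1}} t$ via a deterministic conditional rule $\alpha : l \ra r \La s_1 \ras t_1, \ldots, s_k \ras t_k$ at a position $p$ with matching substitution $\sigma$, so that $s|_p = l\sigma$ and $t = s[r\sigma]_p$, and for each $i$ we have $s_i\sigma \ras_{\R_n} t_i\sigma$. By the induction hypothesis applied to each of these (shorter-depth) condition sequences, each step $s_i\sigma \ras_{\R_n} t_i\sigma$ lifts to $s_i\sigma \ras_{\useq(\R)} t_i\sigma$. The key then is to assemble a chain through the $U$-symbols: starting from $l\sigma$, apply the introduction rule to reach $U^\alpha_1(s_1\sigma, \vec{X_1}\sigma)$, rewrite the conditional argument $s_1\sigma \ras t_1\sigma$ using the lifted condition derivation (this requires that the rewrites take place \emph{inside} the first argument of the $U$-term, which is licensed since rewriting is closed under contexts), then fire the first switch rule to obtain $U^\alpha_2(s_2\sigma, \vec{X_2}\sigma)$, and so on, until the elimination rule produces $r\sigma$. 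Finally, closure under contexts lets me embed this whole derivation at position $p$ inside $s$, yielding $s \ras_{\useq(\R)} t$; since the chain contains at least the introduction and elimination steps, it is in fact a $\rap$ step.

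The main obstacle to watch carefully is the bookkeeping of variable bindings in the intermediate $U$-terms. When I rewrite the conditional argument $s_i\sigma \ras t_i\sigma$ inside $U^\alpha_i(\,\cdot\,, \vec{X_i}\sigma)$, I must check that the variable arguments $\vec{X_i}\sigma$ stay \emph{fixed} during these steps, so that the switch rule $U^\alpha_i(t_i, \vec{X_i}) \ra U^\alpha_{i+1}(s_{i+1}, \vec{X_{i+1}})$ matches; the determinism condition $\vars(s_i) \subseteq \vars(l, t_1, \ldots, t_{i-1})$ guarantees that $s_{i+1}\sigma$ and the bindings $\vec{X_{i+1}}\sigma$ are already determined by $\sigma$ and are exactly what the switch rule reconstructs, so the match is consistent. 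This is the one place where the proof genuinely uses determinism rather than being a purely mechanical lifting, and it is the step I would write out with the most care. The rest is routine induction and context-closure.
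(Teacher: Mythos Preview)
Your proposal is correct and follows essentially the same approach as the paper's proof: induction on the depth of the conditional rewrite step, with the base case handled by the fact that unconditional rules are preserved by $\useq$, and the inductive step assembling the introduction--switch--elimination chain using the lifted condition derivations. Your extra care about the variable-binding bookkeeping (and the role of determinism in making the switch rules match) is a detail the paper leaves implicit, but the overall structure is the same.
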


\begin{proof}
	By induction on the depth $n$ of the rewrite step $s \ra_{n, \R} t$. 
	If  $n = 0$, then the applied rule $\alpha$ is an unconditional
	rule and $\alpha \in \useq(\R)$.
	
	Otherwise, let 
	$\alpha: l \ra r \La s_1 \ras t_1, \ldots, s_k \ras t_k$ be the rule applied in
	$s \ra_{n, \R} t$ 
	so that	$s = C[l\sigma]$ and $t = C[r\sigma]$. By the definition of the depth, 
	$s_i\sigma \ras_{\R_{n - 1}} t_i\sigma$
	for all $i \in \set{1, \ldots, k}$. By the induction hypothesis, there are
	derivations $s_i\sigma \ras_{\useq(\R)} t_i\sigma$. Thus, there is the following derivation
	in $\useq(\R)$:
	\[
		\begin{aligned}
		l\sigma &\ra U^\alpha_1(s_1 \sigma, \vec{X}_1\sigma) \ras U^\alpha_1(t_1 \sigma, \vec{X}_1\sigma)
		\ra U^\alpha_2(s_2 \sigma, \vec{X}_2\sigma) \ras U^\alpha_2(t_2 \sigma, \vec{X}_2\sigma) \ra \cdots \\
		&\ra 
		 U^\alpha_1(s_k \sigma, \vec{X}_k\sigma) \ras 
		U^\alpha_k(t_k \sigma, \vec{X}_k\sigma) \ra r\sigma
		\end{aligned}
	\]
\end{proof}


The previous completeness result constructs a derivation in $\useq(\R)$ in which
first the U-term is introduced, then the conditional argument is rewritten and
finally the U-term is eliminated. The definition of the U-eager rewrite strategy
is based on such derivations but it also allows rewrite steps inside variable bindings. 

In U-eager derivations, after a U-term is introduced only rewrite steps
inside this U-term are allowed until it is eliminated. 
Rewrite steps outside of U-terms are forbidden. The reason for this limitation is that in a derivation in some $\uopt(\R)$ 
one obtains mixed terms that have no meaning in the original CTRS. For instance, in Example~\ref{ex_unsound_uopt}
the mixed term $U^\alpha_1(B)$ is a common reduct of $s(b)$ and $s(c)$. Yet, there is no such term in the 
original CTRS.

For $\useq$, mixed terms can be back-translated to the left-hand
side of the conditional rule because all variable bindings are preserved. Therefore, U-eagerness for some $\useq(\R)$ can be 
generalized to also allow rewrite steps outside of U-terms even if they are not eliminated. In 
such \emph{almost U-eager rewrite sequences} if a U-term is not rewritten it is considered to 
represent a failed conditional evaluation and thus the arguments
of such a U-term and the U-term itself must not be rewritten anymore. Rewrite steps
above such U-terms, including erasing rewrite steps, are allowed. 

\begin{definition}[almost U-eager derivations]\label{df_almost_ueager}
Let $\R$ be a DCTRS. 
A derivation $D: u_0 \ra_{p_0} u_1 \ra_{p_1} \cdots \ra_{p_{n-1}} u_n$ in 
$\useq(\R)$ is \emph{almost U-eager}, if for every rewrite
step $u_i \ra_{p_i} u_{i+1}$, if there is a U-term $u_i|_q$ such that
$q \le p_i$, then also $q \le p_{i-1}$ ($i \in \set{1, \ldots, n-1}$).
\end{definition}

This way, rewrite steps in U-terms are always grouped in such derivations which makes tracking terms easier.
Furthermore, U-terms that represent intermediate evaluation steps of conditions are 
isolated from other rewrite steps. Rewrite steps above U-terms that are rewritten in a later rewrite step 
are vitally important for
unsoundness. Observe that the unsound
derivation of Example~\ref{ex_unsound} is not almost U-eager and that in the unsound derivation a non-linear
rewrite step is applied above a U-term.


The proof for soundness of almost U-eager derivations will use
the same proof structure that was already used in \cite{rta12-gmeiner-et-al}. 
First, we recall the following lemma
that states that rewrite steps in variable and conditional
arguments can be extracted from 
derivations. 

\begin{lemma}[extraction lemma of \cite{rta12-gmeiner-et-al}]\label{lm_extract}
Let $\R$ be a DCTRS and $D: u_0 \ra_{p_0} u_1 \ra_{p_1} \cdots \ra_{p_{n-1}} u_n$ be a derivation
in $\useq(\R)$ ($u_0 \in \T$). If $u_n|_{p} = U^\alpha_i(w, \vec{X_i}\sigma_{i+1})$
where $\alpha$ is the conditional rule $l \to r \La s_1 \ras t_1, \ldots, s_k \ras t_k$,
then there is an index $m$ and a position $q$ such that $u_m|_q$ is an ancestor of 
$u_n|_p$ and there are 
substitutions $\sigma_1, \ldots, \sigma_i$ such that $u_m|_q = l\sigma_1$ and 
the following derivations can be extracted from $D$:
\begin{itemize}
	\item $s_j\sigma_j \ras_{\useq(\R)} t_j\sigma_{j+1}$ ($j \in \set{1, \ldots, i - 1}$),
	\item $x\sigma_j \ras_{\useq(\R)} x\sigma_{j+1}$ ($j \in \set{1, \ldots, i}$, $x \in X_j$), and
	\item $s_i\sigma_i \ras_{\useq(\R)} w$.
\end{itemize}

Furthermore, in the reductions above for every single rewrite step $u \ra v$ 
there is an index
$m' \in \set{m + 1, \ldots, n - 1}$ and a position $q'$ such that $u_{m'}|_{q'} = u$ and $u_{m'+1}|_{q'} = v$.
\end{lemma}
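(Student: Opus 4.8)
The plan is to prove the extraction lemma by induction on the length $n$ of the derivation $D$, tracking how the U-term $u_n|_p = U^\alpha_i(w, \vec{X_i}\sigma_{i+1})$ comes into existence and how its arguments evolve. The crucial observation is that a U-term rooted by a $U^\alpha_i$ symbol can only be created in two ways: either by an introduction step (for $i = 1$, producing $U^\alpha_1(s_1\sigma, \vec{X_1}\sigma)$ from $l\sigma$) or by a switch step (for $i > 1$, producing $U^\alpha_i(s_i\sigma, \vec{X_i}\sigma)$ from $U^\alpha_{i-1}(t_{i-1}\sigma, \vec{X_{i-1}}\sigma)$). So I would trace the position $p$ backwards along the ancestor relation until I reach the step that first creates the outermost $U^\alpha_i$ symbol at that position.

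First I would set up the backward trace: since $u_0 \in \T$ is an original term containing no U-symbols, the $U^\alpha_i$ at position $p$ in $u_n$ must have been introduced at some step $u_{j} \to u_{j+1}$ along the way. I would locate the latest such creating step. If this is an introduction step ($i = 1$), then $u_{j}|_q = l\sigma_1$ for the appropriate $\sigma_1$ and $q$ the ancestor position, and $u_{j+1}|_q = U^\alpha_1(s_1\sigma_1, \vec{X_1}\sigma_1)$; I set $m = j$. If it is a switch step ($i > 1$), then $u_{j}|_q = U^\alpha_{i-1}(t_{i-1}\sigma_i, \vec{X_{i-1}}\sigma_i)$, and I would invoke the induction hypothesis on the prefix derivation $u_0 \to \cdots \to u_{j}$ applied to the U-term $U^\alpha_{i-1}(\cdots)$ to obtain the ancestor $l\sigma_1$, the substitutions $\sigma_1, \ldots, \sigma_{i-1}$, and the extracted conditional and variable derivations for indices up to $i-1$. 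Matching the switch rule $U^\alpha_{i-1}(t_{i-1}, \vec{X_{i-1}}) \to U^\alpha_i(s_i, \vec{X_i})$ against $u_{j}|_q$ then identifies $\sigma_i$ and yields the derivation $s_{i-1}\sigma_{i-1} \ras t_{i-1}\sigma_i$ as the conditional argument of $U^\alpha_{i-1}$ evolving from $s_{i-1}\sigma_{i-1}$ (its value at creation) down to $t_{i-1}\sigma_i$ (its value when the switch fires).

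The key steps after this are to extract the three families of derivations by reading off, for each argument of the U-term, the subsequence of rewrite steps of $D$ that take place strictly inside that argument. For the conditional argument of $U^\alpha_i$ I collect the steps below the first position to obtain $s_i\sigma_i \ras w$; for each variable $x \in X_j$ I collect the steps acting inside that variable binding to obtain $x\sigma_j \ras x\sigma_{j+1}$, observing that the binding for $x$ is threaded unchanged through the switch rules (since $\useq$ preserves all variable arguments), so its only modifications come from rewrite steps performed inside it. Each extracted single step $u \to v$ really does occur as $u_{m'}|_{q'} \to u_{m'+1}|_{q'}$ for the corresponding index $m'$ and position $q'$, giving the final ``furthermore'' claim essentially for free from the way the subsequences are selected.

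The main obstacle is bookkeeping the substitutions $\sigma_1, \ldots, \sigma_i$ consistently and justifying that the variable arguments are never altered except by genuine rewrite steps inside them. Because $\useq$ (unlike $\uopt$) carries every variable of $X_i = \vars(l, t_1, \ldots, t_{i-1})$ into each successive U-symbol, I must argue that the $j$-th variable tuple $\vec{X_j}\sigma_j$ is transported verbatim by the switch rules, so that the difference between $\sigma_j$ and $\sigma_{j+1}$ is entirely accounted for by reductions located inside the corresponding argument positions; this is where the precise matching of the switch and introduction rules of the definition of $\useq$ has to be done carefully, and where one must verify that the extracted reductions for distinct arguments use disjoint, parallel position ranges so that they can be collected independently. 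Since this is the extraction lemma imported verbatim from \cite{rta12-gmeiner-et-al}, I expect these details to follow the same induction already carried out there.
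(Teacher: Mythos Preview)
The paper does not supply its own proof of this lemma; it is quoted verbatim as the ``extraction lemma of \cite{rta12-gmeiner-et-al}'' and immediately afterwards the text only says that it will be used implicitly. So there is no in-paper argument to compare your proposal against.

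That said, your outline is the expected reconstruction of the original proof: trace the $U^\alpha_i$-headed subterm at $p$ back along the ancestor relation to the step that created it, distinguish whether that creating step is the introduction rule ($i=1$) or a switch rule ($i>1$), in the latter case apply the induction hypothesis to the strictly shorter prefix ending in a $U^\alpha_{i-1}$-term, and read off the three families of reductions from the steps that occur in the respective argument positions between creation and the end of $D$. The only point where you should be a bit more explicit is the backward trace itself: the ancestor of a position need not be unique when a non-linear left-hand side is applied above the tracked U-term, so you should say that you pick some ancestor (all choices carry the same subterm, since the matching substitution identifies them), and that the trace terminates because $u_0\in\T$ contains no U-symbols. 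With that clarification, your bookkeeping for the $\sigma_j$ and the ``furthermore'' clause goes through exactly as you describe, using that in $\useq$ each switch rule passes every variable of $X_j$ on unchanged.
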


In the following this extraction lemma will be used implicitly.

Next, a monotony result on $\tb$ is shown.

\begin{lemma}[monotony of $\tb$]\label{lm_tb_mon}
Let $\R$ be a DCTRS. If $u \ra_{p,\useq(\R)} v$
for $u, v \in \useq(\T)$ and $\tb(u|_p) \ras_\R \tb(v|_p)$
then $\tb(u|_q) \ras_\R \tb(v|_{q'})$ for all $q \in \poss(u)$
and descendants $q'$ of $q$.
\end{lemma}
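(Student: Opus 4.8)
The plan is to isolate the one genuinely interesting configuration --- where $q$ lies strictly above the contracted redex --- and to reduce everything else to trivial observations about the descendant relation. Writing $A$ for the step $u \ra_{p,\useq(\R)} v$, I would first run through the possible positions of $q$ relative to $p$. If $q \parallel p$, or if $q$ lies strictly below $p$ inside a variable binding of the applied rule (the case $q = p.p'.q''$ with $l|_{p'}$ a variable), then the corresponding descendant $q'$ satisfies $u|_q = v|_{q'}$, because such subterms are either untouched or merely copied by the step; hence $\tb(u|_q) = \tb(v|_{q'})$ and the claim holds with zero $\R$-steps. If $q$ lies strictly below $p$ at a non-variable position of the left-hand side, then $q \backslash A = \emptyset$ and there is nothing to prove. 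If $q = p$, the only descendant is $p$ itself and the conclusion is exactly the hypothesis $\tb(u|_p) \ras_\R \tb(v|_p)$. This leaves the case where $q$ is a proper prefix of $p$, where $q' = q$ and $v|_q$ arises from $u|_q$ by contracting the same redex.

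For that remaining case, since every subterm of a mixed term is again a mixed term, it suffices to prove the following core claim (the instance $q = \epsilon$): if $s \ra_{\pi,\useq(\R)} s'$ and $\tb(s|_\pi) \ras_\R \tb(s'|_\pi)$, then $\tb(s) \ras_\R \tb(s')$. I would prove this by induction on the length of $\pi$. The base case $\pi = \epsilon$ is the hypothesis. For $\pi = i.\pi'$ the root of $s$ is either an original symbol $f \in \F$ or a U-symbol. When $s = f(s_1, \ldots, s_n)$, the map $\tb$ is homomorphic at the root, so the induction hypothesis applied to the step $s_i \ra_{\pi'} s_i'$ together with closure of $\ras_\R$ under contexts gives $\tb(s) \ras_\R \tb(s')$.

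The U-term case $s = U^\alpha_j(w, v_1, \ldots, v_m)$ is the heart of the proof and splits according to which argument contains the redex. If the redex is in the conditional argument ($i = 1$), then the defining clause of $\tb$ discards $w$ altogether, so $\tb(s) = l\sigma = \tb(s')$ and the claim is immediate --- this is exactly the point at which the fact that $\tb$ \emph{forgets} the conditional argument does the work. If the redex is in a variable argument ($i \ge 2$, inside $v_{i-1}$), the induction hypothesis yields $\tb(v_{i-1}) \ras_\R \tb(v_{i-1}')$; since $\tb(s) = l\sigma$ and $\tb(s') = l\sigma'$ differ only in the binding of the single variable $x_{i-1}$, I rewrite the occurrences of $\tb(v_{i-1})$ in $l\sigma$ to $\tb(v_{i-1}')$ one at a time, obtaining $l\sigma \ras_\R l\sigma'$. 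The only property of $\ras_\R$ this uses is closure under contexts, so the non-linear situation (where $x_{i-1}$ occurs several times in $l$) is handled simply by iterating, with no appeal to a substitution lemma for the conditional relation. I expect the main obstacle to be the bookkeeping of argument indices: matching the $i$-th argument of $U^\alpha_j$ against the fixed variable list $\overrightarrow{\vars(l, t_1, \ldots, t_{j-1})}$ so that ``redex in argument $i$'' is correctly read as ``redex in the binding of $x_{i-1}$''.
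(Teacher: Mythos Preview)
Your proposal is correct and follows essentially the same route as the paper: a case split on the relative position of $q$ and $p$, with the only nontrivial case $q < p$ handled by induction on the distance from $q$ to $p$, branching on whether the root of the current subterm is an original symbol or a $U$-symbol (and, in the latter case, whether the redex lies in the conditional argument or a variable argument). Your treatment of the sub-cases $p < q$ is in fact slightly more explicit than the paper's, which collapses the variable-position and no-descendant situations into a single line; otherwise the arguments coincide.
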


\begin{proof}
By case distinction on $p$ and $q$: If $p < q$ or $p \parallel q$,
then $u|_q = v|_{q'}$, hence also $\tb(u|_q) = \tb(v|_{q'})$.

Otherwise, if $q \le p$, then there is only one descendant of $u|_q$ which is $v|_q$.
Let $q.q' = p$. Then by induction on $|q'|$,
if $q = p$ then $\tb(u|_q) \ras_\R \tb(v|_{q'})$ is equivalent
to the assumption $\tb(u|_p) \ras_\R \tb(v|_{p})$.

For the induction step, let $q' = i.q''$. There are the following cases based on
the term $u|_q$: If $u|_q = f(u_1, \ldots, u_n)$ where $f \in \F$ is an original 
symbol, then $\tb(u|_q) = f(\tb(u_1), \ldots, \tb(u_n))$ and
$\tb(v|_q) = f(\tb(u_1), \tb(u_{i-1}), \tb(v|_{q.i}), \tb(u_{i+1}), \ldots, \tb(u_n))$.
By the induction hypothesis $\tb(u_i) \ras \tb(v|_{q.i})$, thus also 
$\tb(u|_q) \ras \tb(v|_q)$.

The remaining case is that $u|_q$ is a U-term $U^\alpha_j(w, x_1, \ldots x_n)\sigma$.
If $i = 1$, then the rewrite step is inside the conditional argument so that the variable 
bindings are unmodified and $\tb(u|_q) = \tb(v|_q)$.
Otherwise, $v|_q = U^\alpha_j(w, x_1, \ldots x_n)\sigma'$ where $x_j\sigma = x_j\sigma'$
for all $j \in \set{1, \ldots, i - 2, i, \ldots, n}$. By the induction hypothesis, 
$\tb(x_{i-1}\sigma) \ras \tb(x_{x_{i-1}}\sigma')$. Hence,
$\tb(u|_q) = l\,\tb(\sigma)$ where $\tb(v|_q) = l\,\tb(\sigma')$ and thus
$\tb(u|_q) \ras \tb(v|_q)$.
\end{proof}

In the next lemma, single rewrite steps of a derivation are translated using $\tb$.

\begin{lemma}[technical key lemma]\label{lm_almost_ueager_sound}
	Let $\R$ be a right-stable DCTRS and
	let $u_0 \ra_{p_0} u_1 \ra_{p_1} \cdots \ra_{p_{n-1}} u_n$ be an almost
	U-eager derivation in $\useq(\R)$ where $u_0 \in \T$. Then, 
	$\tb(u_i|_{p_i}) \ras_\R \tb(u_{i+1}|_{p_i})$ ($i \in \set{0, \ldots, n-1}$).
\end{lemma}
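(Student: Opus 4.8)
The plan is to prove this by induction on $n$, the length of the almost U-eager derivation, and for each step to perform a case distinction on the type of rule applied in $u_i \ra_{p_i} u_{i+1}$. The statement claims that every single rewrite step, once back-translated via $\tb$, becomes a (possibly empty) reduction $\tb(u_i|_{p_i}) \ras_\R \tb(u_{i+1}|_{p_i})$ in the original CTRS. The four rule types to analyze are: unconditional rules of $\R_u$ (which survive unchanged in $\useq(\R)$), introduction rules $l \ra U^\alpha_1(s_1, \vec{X_1})$, switch rules $U^\alpha_{j}(t_j, \vec{X_j}) \ra U^\alpha_{j+1}(s_{j+1}, \vec{X_{j+1}})$, and elimination rules $U^\alpha_k(t_k, \vec{X_k}) \ra r$. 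The almost U-eagerness hypothesis is what guarantees that whenever a U-term is touched, all rewriting inside it has been grouped contiguously, so that I can invoke the extraction lemma (Lemma~\ref{lm_extract}) to recover fully evaluated conditions.

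For the unconditional case, $\tb$ commutes with original function symbols, so the step back-translates directly to the same unconditional rule applied in $\R$, giving $\tb(u_i|_{p_i}) \ra_\R \tb(u_{i+1}|_{p_i})$. The introduction step is essentially trivial: $\tb(l\sigma) = l\,\tb(\sigma)$ by the definition of $\tb$, and $\tb(U^\alpha_1(s_1\sigma, \vec{X_1}\sigma)) = l\,\tb(\sigma)$ as well, so both sides back-translate to the same term and the required reduction is empty. The switch and elimination steps are where the real work lies. Here I would use the extraction lemma: if $u_i|_{p_i} = U^\alpha_j(t_j\sigma_{?}, \vec{X_j}\sigma_?)$ is the redex of a switch or elimination rule, then the extraction lemma produces substitutions $\sigma_1, \ldots, \sigma_j$ together with extracted subderivations $s_m\sigma_m \ras_{\useq(\R)} t_m\sigma_{m+1}$ and $x\sigma_m \ras_{\useq(\R)} x\sigma_{m+1}$ witnessing that the first $j$ conditions of $\alpha$ have actually been satisfied. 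Since these extracted derivations consist of genuine rewrite steps occurring earlier in $D$, I can feed them back into the induction hypothesis (or an auxiliary soundness claim) to conclude that the conditions hold in the original CTRS. Then the back-translation of the switch/elimination redex is $l\sigma'$ for the appropriate $\sigma'$, the conditional rule $\alpha$ is applicable in $\R$, and the step back-translates to an application of $\alpha$.

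The main obstacle I expect is the elimination step, and specifically closing the loop between the extracted $\useq(\R)$-subderivations of the conditions and their satisfaction in $\R$. The extraction lemma only tells me $s_m\sigma_m \ras_{\useq(\R)}t_m\sigma_{m+1}$, which is a transformed-system reduction; to apply $\alpha$ in $\R$ I need $s_m\sigma_m \ras_\R t_m\sigma_{m+1}$, i.e.\ soundness of those inner derivations. The resolution is that the extracted subderivations are themselves almost U-eager and strictly shorter, so the induction hypothesis applies to each of their steps; combined with the monotony of $\tb$ (Lemma~\ref{lm_tb_mon}), which lets me propagate a single back-translated step through arbitrary surrounding context and along descendants, I can assemble the step-wise translations into the full reductions $\tb(s_m\sigma_m) \ras_\R \tb(t_m\sigma_{m+1})$. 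The right-stability hypothesis is what controls the $t_m$: each $t_m$ is a linear constructor or ground irreducible term with fresh variables, so $\tb(t_m\sigma_{m+1})$ collapses cleanly to $t_m\sigma_{m+1}$ and no unintended redexes are created in the binding arguments. Threading right-stability through the U-term bookkeeping so that the extracted substitutions line up exactly with the pattern of $\alpha$ is the delicate part; once that alignment is verified, the conditional rule fires and the elimination step back-translates to $l\sigma' \ra_\R r\sigma'$, completing the induction.
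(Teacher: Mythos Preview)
Your overall strategy---induction on the derivation length with a case split on the rule type, using Lemma~\ref{lm_tb_mon} to propagate back-translated steps---matches the paper. Two points deserve correction, however.

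First, the switch case is not ``where the real work lies'': it is as trivial as the introduction case. If $u_{n-1}|_{p_{n-1}} = U^\alpha_j(t_j,\vec{X_j})\sigma$ and $u_n|_{p_{n-1}} = U^\alpha_{j+1}(s_{j+1},\vec{X_{j+1}})\sigma$, then by the definition of $\tb$ both back-translate to $l\,\tb(\sigma)$, since $\tb$ ignores the conditional argument and $\vec{X_j}\sigma$, $\vec{X_{j+1}}\sigma$ agree on $\vars(l)$. No extraction is needed; the required reduction is empty. The paper handles introduction and switch steps together in one line for exactly this reason.

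Second, and more seriously, your plan to apply the induction hypothesis to the \emph{extracted} subderivations $s_j\sigma_j \ras_{\useq(\R)} t_j\sigma_{j+1}$ as standalone almost U-eager derivations does not work: the hypothesis of the lemma demands that the derivation start from an original term $u_0\in\T$, but $s_j\sigma_j$ is in general a mixed term (the $\sigma_j$ may bind variables to U-terms produced earlier). The paper avoids this trap by applying the induction hypothesis once, to the prefix $u_0\ra\cdots\ra u_{n-1}$, which \emph{does} start from an original term. This yields $\tb(u_i|_{p_i})\ras_\R\tb(u_{i+1}|_{p_i})$ for every $i<n-1$, and then Lemma~\ref{lm_tb_mon} lifts each such step to arbitrary enclosing positions and descendants, giving in particular $\tb(x\sigma_i)\ras_\R\tb(x\sigma_{i+1})$ and $\tb(s_i\sigma_i)\ras_\R\tb(t_i\sigma_{i+1})$ for the relevant subterms. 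Almost U-eagerness is then used directly (rather than via the extraction lemma) to argue that, for the elimination step at $p_{n-1}$, all steps back to the matching introduction at some index $m$ with $p_m=p_{n-1}$ lie below $p_{n-1}$, so the decomposition into $l\sigma_1 \ra U^\alpha_1(\ldots)\ras\cdots\ra r\sigma_{k+1}$ is literally a contiguous segment of the given derivation at that position. From there your use of right-stability to align the $t_i$'s and fire $\alpha$ is exactly what the paper does.
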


\begin{proof}
In the following, assume w.l.o.g.~that for all substitutions, mapped terms
do not share variables with the domain, \ie,~$\mathcal{D}om(\sigma) 
\cap \vars(x\sigma) = \emptyset$ for all $x \in \mathcal{D}om(\sigma)$.

By induction on the length of the derivation $n$: If $n = 0$, the result holds
vacuously.

Otherwise $\tb(u_i|_q) \ras_\R \tb(u_{i+1}|_{q'})$ for all 
one-step descendants $u_{i+1}|_{q'}$ of
$u_i|_q$ by the induction hypothesis and Lemma~\ref{lm_tb_mon}. 
Consequently also $\tb(u_i|_q) \ras_\R \tb(u_j|_{q''})$ for all
descendants $u_j|_{q''}$ of $u_i|_q$  ($1 \le i < j < n$).
 
By case distinction on the rule applied in the 
last rewrite step $u_{n-1} \ra_{\alpha, p_{n-1}} u_{n}$:
If the applied rule is an unconditional original rule $l \to r \in \R$, then 
$u_{n-1}|_{p_{n-1}} = l\sigma$,
$u_{n}|_{p_{n-1}} = r\sigma$,
$\tb(u_{n-1}|_{p_{n-1}}) = l\,\tb(\sigma)$ and 
$\tb(u_{n}|_{p_{n-1}}) = r\,\tb(\sigma)$.

If the applied rule is an introduction rule or a switch rule, 
$\tb(u_{n-1}|_{p_{n-1}}) = \tb(u_{n}|_{p_{n-1}})$.

Finally, if the applied rule is an elimination rule, then by the definition
of almost U-eagerness, all preceding rewrite steps are below $p_{n-1}$ up to
the introduction step of the U-term, i.e., if the conditional rule
is $\alpha: l \ra r \La s_1 \ras t_1, \ldots, s_k \ras t_k$, then 
there is an $m$ such that $u_m|_{p_m} = l\sigma_1$, $p_m = p_{n-1}$, $p_m \le p_{i}$
for all $i \in \set{m, \ldots, n-1}$ and the derivation $u_m|_{p_m} \ras_{\useq(\R)} u_{n}|_{p_{n-1}}$
is
\[
\begin{aligned}
l\sigma_1 \ra U^\alpha_1(s_1\sigma_1, \vec{X_1}\sigma_1) 
\ras U^\alpha_1(t_1\sigma_2, \vec{X_1}\sigma_2) &\ra U^\alpha_2(s_2\sigma_2, \vec{X_2}\sigma_2) \ras \cdots \\
\ras U^\alpha_k(t_{k}\sigma_{k+1}, \vec{X_k}\sigma_{k+1}) &\ra r\sigma_{k+1}
\end{aligned}\]

By the induction hypothesis, $\tb(x\sigma_i) \ras_\R \tb(x\sigma_{i+1})$ and
$\tb(s_i\sigma_i) \ras_\R \tb(t_i\sigma_{i+1})$ for all $x \in X_i$.

Let $\sigma$ be the combined substitution $\tb(\sigma_{1}) \tb(\sigma_2) \cdots \tb(\sigma_{k+1})$,
then $s_i \sigma \ras_\R s_i\,\tb(\sigma_i)$ and $t_i\,\tb(\sigma_{i+1}) = t_i \sigma$ by
right-stability. Hence, the conditions are satisfied for $\sigma$ and $l\sigma \ra_\R r\sigma$. Furthermore, $l\sigma = l\sigma_1$.
Thus, $l\,\tb(\sigma_1) \ra_\R r\sigma \ras_\R r\,\tb(\sigma_{k+1})$.
\end{proof}

Finally we prove soundness of almost U-eager 
rewrite sequences.

\begin{lemma}[soundness of almost U-eager derivations]\label{lm_sound_eager}
	Let $\R$ be a right-stable DCTRS. If
	$u_0 \ra_{p_0} u_1 \cdots \ra_{p_{n-1}} u_n$ is an almost
	U-eager derivation in $\useq(\R)$ ($u_0 \in \T$) then
	$u_0 \ras_\R \tb(u_n)$.
\end{lemma}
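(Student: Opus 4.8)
The plan is to reduce the statement to the two preceding lemmas by lifting the local reductions they provide up to the root of each term. Lemma~\ref{lm_almost_ueager_sound} already supplies, for every step $u_i \ra_{p_i} u_{i+1}$ of the almost U-eager derivation, the local reduction $\tb(u_i|_{p_i}) \ras_\R \tb(u_{i+1}|_{p_i})$. This is exactly the hypothesis demanded by the monotony lemma (Lemma~\ref{lm_tb_mon}), so the first move is to feed each step of the derivation into that lemma.

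Concretely, I would instantiate Lemma~\ref{lm_tb_mon} with $u = u_i$, $v = u_{i+1}$, $p = p_i$, and the root position $q = \epsilon$. Since $\epsilon \in \poss(u_i)$ and $\epsilon \le p_i$, the first case of the descendant relation applies, so the only descendant of $\epsilon$ under the step $u_i \ra_{p_i} u_{i+1}$ is $\epsilon$ itself. The conclusion of the monotony lemma therefore specializes to $\tb(u_i) = \tb(u_i|_\epsilon) \ras_\R \tb(u_{i+1}|_\epsilon) = \tb(u_{i+1})$. Hence each individual step of the derivation becomes a root-level reduction once it is back-translated.

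Chaining these by transitivity of $\ras_\R$ then yields $\tb(u_0) \ras_\R \tb(u_1) \ras_\R \cdots \ras_\R \tb(u_n)$, that is, $\tb(u_0) \ras_\R \tb(u_n)$. Finally, because $u_0 \in \T$ contains no U-terms, the definition of $\tb$ makes it act as the identity homomorphism on $u_0$, so $\tb(u_0) = u_0$ and therefore $u_0 \ras_\R \tb(u_n)$, which is the claim.

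Essentially all of the genuine work has already been discharged by the two lemmas, so I do not expect a real obstacle here. The only points that require checking are routine: that the root is an admissible choice of $q$ (immediate, since $\epsilon$ is an ancestor of every position), that its sole descendant under a rewrite step is again the root (so the lifted reduction targets a full back-translation rather than a proper subterm), and that $\tb$ restricts to the identity on $\T$. The delicate reasoning—handling right-stability and the combined substitution $\tb(\sigma_{1}) \tb(\sigma_2) \cdots \tb(\sigma_{k+1})$ at elimination steps—has been confined to Lemma~\ref{lm_almost_ueager_sound}, and the monotony lemma takes care of propagation through enclosing contexts, including the erasing steps allowed above non-rewritten U-terms.
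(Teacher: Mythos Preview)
Your proposal is correct and follows essentially the same approach as the paper: both arguments apply Lemma~\ref{lm_almost_ueager_sound} to obtain the local reductions $\tb(u_i|_{p_i}) \ras_\R \tb(u_{i+1}|_{p_i})$, lift each to the root via Lemma~\ref{lm_tb_mon} with $q = \epsilon$, and then concatenate. The only cosmetic difference is that the paper packages the concatenation as an explicit induction on $n$, whereas you chain the steps directly and observe $\tb(u_0) = u_0$ at the end.
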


\begin{proof}
By induction on the length of the derivation, if $n = 0$ the result holds vacuously.
Otherwise, by Lemma~\ref{lm_almost_ueager_sound}, $\tb(u_{n-1}|_{p_{n-1}}) \ras_\R \tb(u_n|_{p_{n-1}})$.
By Lemma~\ref{lm_tb_mon}, $\tb(u_{n-1}) \ras_\R \tb(u_n)$. Since by the inductive hypothesis
$u_0 \ras_\R \tb(u_{n-1})$ finally $u_0 \ras_\R \tb(u_n)$.
\end{proof}

This result can
be used to prove soundness for other rewrite strategies. Next,
it is shown that innermost derivations can be converted 
into almost U-eager derivations,
thus proving soundness of innermost rewriting. For this reason, innermost derivations
are translated into almost U-eager derivations.



\begin{lemma}[innermost to almost-U-eager]\label{lm_innermost_ueager}
Let $\R$ be a DCTRS and let 
$u_0 \ra_{p_0} u_1 \ra_{p_1} \cdots \ra_{p_{n-1}} u_n$ be an innermost derivation ($u_0 \in \T$).
Then there is
an innermost, almost U-eager derivation $u_0 \ras_{\useq(\R)} u_n$.
\end{lemma}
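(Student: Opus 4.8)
The plan is to transform the given innermost derivation into an almost U-eager one with the same first and last term by repeatedly permuting adjacent steps that occur at parallel positions. Such permutations are harmless: if $u \ra_a v \ra_b w$ with $a \parallel b$ and both steps innermost, then also $u \ra_b v' \ra_a w$ with both steps again innermost (the redexes at $a$ and at $b$, and hence the normal-form status of their proper subterms, are unaffected by a contraction at a parallel position), and the final term $w$ is unchanged. So it suffices to show that every innermost derivation can be brought into almost U-eager form using only such parallel swaps.

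The crux is an observation about the interaction of innermost rewriting with U-terms. Suppose a step contracts a redex at a position $p$ while there is a U-term at a position $q$ with $p < q$. Then $u|_q$ is a proper subterm of the contracted redex, so by innermostness it is already a normal form, and every descendant of this occurrence stays a normal form. Applied to a violation of almost U-eagerness at index $i$ (where $q \le p_i$ for some U-term $u_i|_q$ but $q \not\le p_{i-1}$), this rules out $p_{i-1} < q$: that case would force $u_i|_q$ to be either a normal form or not a U-term at all (it would lie inside the pattern of the applied right-hand side or inside a substituted, hence normal-form, variable binding), contradicting that step $i$ rewrites inside it. Therefore, in an innermost derivation every violation has $p_{i-1} \parallel q$, and since $q \le p_i$ this gives $p_{i-1} \parallel p_i$: \emph{every violation is a swap of two parallel steps}. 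The same normal-form argument, applied across the whole derivation instead of to consecutive steps, yields the stronger fact that for each U-term occurrence all steps performed inside it (at positions $\ge$ its root) precede all steps performed strictly above it.

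Combining these, I would reorder greedily: process the derivation from left to right, and whenever the U-term occurrence currently being worked on still has internal steps remaining, pull the next such internal step forward past the intervening steps. By the stronger fact no intervening step lies above $q$, and by construction none of the skipped steps lies inside $q$, so every skipped step is parallel to $p_k$ and the move is a legal parallel swap; iterating collects the internal steps of each U-term into one block directly after its introduction, which is exactly almost U-eagerness, with innermostness and the endpoints preserved throughout. The main obstacle is to show that this greedy grouping terminates and never undoes earlier work, i.e.\ that an already assembled internal block is not disturbed while grouping a later, parallel occurrence. The delicate case is two U-term occurrences at parallel positions whose internal steps are interleaved, where symmetric measures such as the total number of interrupting steps are invariant under a single swap; I would resolve this by fixing a processing order on parallel occurrences (say by the lexicographic order of their root positions) and arguing that, with the prefix of already grouped blocks held fixed, each swap strictly advances the boundary of the grouped prefix, so the procedure halts after finitely many swaps and returns an innermost, almost U-eager derivation from $u_0$ to $u_n$.
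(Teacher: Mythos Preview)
Your key observation is exactly the one the paper uses: in an innermost derivation starting from an original term, a violation of almost U-eagerness at index $i$ (a U-term at $q \le p_i$ with $q \not\le p_{i-1}$) forces $p_{i-1} \parallel q$, because the case $p_{i-1} < q$ is ruled out by the normal-form argument you give. Your analysis of why $p_{i-1} < q$ is impossible (the U-symbol at $q$ in $u_i$ would have to come from the substitution part of the right-hand side, hence be a normal form, contradicting $p_i \ge q$) is correct and is precisely what the paper means by ``not possible because of the assumption that the derivation is innermost''.

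Where you diverge from the paper is in the bookkeeping for the reordering. You set up a global greedy procedure and then worry about termination, noting the delicate case of interleaved parallel U-terms and proposing a lexicographic processing order. The paper avoids all of this by a straightforward induction on the length $n$: the induction hypothesis already supplies an innermost, almost U-eager derivation $u_0 \ras u_{n-1}$, and one only has to append the single step $u_{n-1} \ra_{p_{n-1}} u_n$. If that step violates the condition via the innermost U-term at $q$, one locates the largest block of trailing steps $p_{n-m},\ldots,p_{n-2}$ all parallel to $q$, observes $q \le p_{n-m-1}$ (again by the innermostness argument), and moves the single step at $p_{n-1}$ in front of this block. That is one batch of parallel swaps, and the result is directly almost U-eager; no global measure is needed because the ``grouped prefix'' you want to build is simply handed over by the induction hypothesis.

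So your plan is not wrong, but the induction on $n$ is the cleaner packaging: it turns your greedy reordering into a single relocation of the last step and makes the termination concern disappear.
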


\begin{proof}
By induction on the length $n$ of the derivation.
If $n = 0$ the result holds vacuously.
Otherwise, by the induction hypothesis, 
the derivation $u_0 \ras_{\useq(\R)} u_{n-2} \ra_{p_{n-2}, \useq(\R)} u_{n-1}$ is 
innermost and almost U-eager.

By case distinction on the last rewrite step $u_{n-1} \ra_{p_{n-1}} u_n$:
If $p_{n-1}$ is not below a U-term, then $u_0 \ras_{\useq(\R)} u_n$ is already almost U-eager.
Otherwise, there is a U-term $u_{n-1}|_q$ and $q \le p_{n-1}$. 
If there are multiple nested U-terms, let $u_{n-1}|_q$ be the innermost such U-term.
By case distinction on $p_{n-2}$ and $q$: The case $p_{n-2} < q$ is not possible
because of the assumption that the derivation is innermost. 

If $q \le p_{n-2}$, then $u_0 \ras_{\useq(\R)} u_n$ is almost U-eager.

If $q \parallel p_{n-2}$, then let $m$ be the largest value such that 
$p_{n-m}, p_{n-m+1}, \ldots, p_{n-2}$ are parallel to $q$. Since
$u_0 \in \T$, $m < n$. Then, $u_{n-m}|_{q} = u_{n-1}|_q$ and 
$q \le p_{n-m-1}$.  Therefore, the following rewrite sequence in 
$\useq(\R)$ is in fact U-eager:
\[
\begin{aligned}
u_0 \ras u_{n-m-1} &\ra_{p_{n-m-1}} u_{n-m} \ra_{p_{n-1}} u_{n-m}[u_n|_{p_{n-1}}]_{p_{n-1}} \ra_{p_{n-m}} \\
    &\ra_{p_{n-m}} u_{n-m+1}[u_n|_{p_{n-1}}]_{p_{n-1}} \ra_{p_{n-m+1}} u_{n-m+2}[u_n|_{p_{n-1}}]_{p_{n-1}} \ra_{p_{n-m+2}} \cdots \\
    & \ra_{p_{n-3}} u_{n-2}[u_n|_{p_{n-1}}]_{p_{n-1}} \ra_{p_{n-2}} u_n 
\end{aligned}
\]
\end{proof}

Since almost U-eager rewrite sequences are sound this implies soundness.

\begin{lemma}[soundness of innermost derivations]\label{lm_sound_innermost}
Let $\R$ be a right-stable DCTRS. 
Let $u \ras_{\useq(\R)} v$ be an innermost derivation such 
that $u \in \T$. Then, $u \ras_{\R} \tb(v)$.
\end{lemma}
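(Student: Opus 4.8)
The plan is to obtain this statement as a direct corollary of the two preceding lemmas, chaining them so that soundness for innermost derivations reduces to soundness for almost U-eager derivations. The given derivation $u \ras_{\useq(\R)} v$ is innermost but need not be almost U-eager, so the first task is to normalize its shape before appealing to the back-translation machinery.

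First I would apply Lemma~\ref{lm_innermost_ueager} to the innermost derivation from $u$ to $v$. Since $u \in \T$, this yields an innermost, \emph{almost U-eager} derivation with exactly the same endpoints, namely $u \ras_{\useq(\R)} v$. The feature I rely on here is that the conversion preserves both the initial term $u$ and the final term $v$: it only regroups the rewrite steps so that, once a U-term is introduced, the steps inside it are performed consecutively before any step strictly above it. Because $v$ is unchanged, the back-translation $\tb(v)$ in the conclusion refers to the same term in either derivation. Then, since the resulting derivation is almost U-eager and starts from an original term $u \in \T$, I would invoke Lemma~\ref{lm_sound_eager}, which immediately gives $u \ras_\R \tb(v)$, as required.

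I do not anticipate a genuine obstacle: the technical work has already been discharged in Lemma~\ref{lm_innermost_ueager} (the reordering argument) and in Lemma~\ref{lm_sound_eager} (the back-translation soundness argument built on the key Lemma~\ref{lm_almost_ueager_sound} and the monotony result Lemma~\ref{lm_tb_mon}). The only points to verify are bookkeeping. One must check that the hypotheses line up: right-stability of $\R$ is required for Lemma~\ref{lm_sound_eager} but not for Lemma~\ref{lm_innermost_ueager}, and both hold here since the statement assumes a right-stable DCTRS and $u \in \T$. The other point is that it is precisely the endpoint-preservation of Lemma~\ref{lm_innermost_ueager} that makes the two conclusions about $\tb(v)$ compatible, so this is the detail I would state explicitly rather than leave implicit.
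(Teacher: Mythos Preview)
Your proposal is correct and matches the paper's own proof essentially verbatim: apply Lemma~\ref{lm_innermost_ueager} to obtain an almost U-eager derivation $u \ras_{\useq(\R)} v$, then conclude $u \ras_\R \tb(v)$ by Lemma~\ref{lm_sound_eager}. The additional remarks you make about endpoint preservation and hypothesis alignment are accurate and merely make explicit what the paper leaves implicit.
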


\begin{proof}
By Lemma~\ref{lm_innermost_ueager}, there is an almost U-eager
derivation $u \ras_{\useq(\R)} v$. By Lemma~\ref{lm_sound_eager},
$u \ras_{\R} \tb(v)$.
\end{proof}

\begin{theorem}[soundness of innermost derivations]\label{tm_innermost}
$\useq$ is sound for innermost derivations for right-stable DCTRSs.
\end{theorem}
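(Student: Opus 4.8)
The plan is to read off the theorem as an immediate corollary of Lemma~\ref{lm_sound_innermost}. Unfolding the definition of soundness for innermost derivations, I must show that whenever $u \ras_{\useq(\R)} v$ is an innermost derivation with \emph{both} endpoints original terms, $u, v \in \T$, then $u \ras_\R v$. Lemma~\ref{lm_sound_innermost} already supplies almost all of this: since $\R$ is a right-stable DCTRS and $u \in \T$, it gives $u \ras_\R \tb(v)$. The only remaining gap is to turn $\tb(v)$ back into $v$, and this is exactly where the extra hypothesis $v \in \T$ (present in the theorem but not in the lemma) is used.

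First I would observe that the back-translation $\tb$ acts as the identity on original terms. For $v \in \T$, a straightforward structural induction on $v$ using only the first two clauses of the definition of $\tb$ shows $\tb(v) = v$: if $v$ is a variable the first clause applies directly, and if $v = f(v_1, \ldots, v_k)$ then $f \in \F$ is an original symbol, so the second clause gives $\tb(v) = f(\tb(v_1), \ldots, \tb(v_k)) = f(v_1, \ldots, v_k) = v$ by the induction hypothesis. The third, U-term clause never applies, because an original term contains no U-symbols by definition.

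Combining the two facts yields the theorem: from $u \ras_\R \tb(v)$ together with $\tb(v) = v$ I obtain $u \ras_\R v$, which is precisely soundness of $\useq$ for innermost derivations over right-stable DCTRSs.

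I expect no genuine obstacle at this final step: all the substance has already been discharged in the preceding chain of lemmas — the extraction lemma (Lemma~\ref{lm_extract}), monotony of $\tb$ (Lemma~\ref{lm_tb_mon}), the technical key lemma (Lemma~\ref{lm_almost_ueager_sound}), soundness of almost U-eager derivations (Lemma~\ref{lm_sound_eager}), and the conversion of an arbitrary innermost derivation into an almost U-eager one (Lemma~\ref{lm_innermost_ueager}). The theorem merely repackages Lemma~\ref{lm_sound_innermost} into the standard soundness format in which both endpoints are required to lie in $\T$, by appealing to the identity $\tb|_\T = \mathrm{id}$.
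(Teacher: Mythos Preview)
Your proposal is correct and follows the same approach as the paper: both proofs simply invoke Lemma~\ref{lm_sound_innermost}. The paper's one-line argument stops at $u \ras_\R \tb(v)$, implicitly using the extended notion of soundness (introduced just before Section~\ref{sec_unsoundness}) in which the target may be a mixed term; your version additionally unfolds the original definition with $v \in \T$ and appeals to the trivial fact $\tb|_\T = \mathrm{id}$, which is fine but not strictly necessary.
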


\begin{proof}
By Lemma~\ref{lm_sound_innermost}, if $u \ras_{\useq(\R)} v$ is
an innermost derivation, then there is a derivation $u \ras_\R \tb(v)$.
\end{proof}

Innermost derivations are therefore sound. Nonetheless, innermost rewriting
is not suitable to simulate conditional rewriting in general because
they are not complete. This can be easily seen in CTRSs in which the
conditions are satisfiable but not innermost-satisfiable.

\begin{example}[incompleteness of innermost rewriting]
Consider the following CTRS and its unraveled TRS:
\[
	\R = \set{
	\trs{
        a &\to b \\
        f(a) &\to b \\
        A &\to B \Leftarrow f(a) \ras b
	}} \qquad
		\useq(\R) = \set{
	\trs{
        a &\to b \\
        f(a) &\to b \\
        A &\to U^\alpha_1(f(a)) \\
        U^\alpha_1(b) &\to B
	}}
\]

In $\R$, the condition $f(a) \ras b$ is satisfied (although there is no innermost
derivation $f(a) \ras_\R b$), therefore, $A$ rewrites to $B$. 
This derivation is innermost (yet, notice that the conditional evaluation is not). 
Nonetheless, in $\useq(\R)$ the only innermost derivation starting from $A$ is
$A \ra U^\alpha_1(f(a)) \ra U^\alpha_1(f(b))$ where the last term is irreducible.
In particular, there is no innermost derivation for $A \ras_{\useq(\R)} B$.
\end{example}

Nonetheless, we obtain completeness if the
transformed TRS is confluent and terminating:

\begin{proposition}[completeness for innermost rewriting]
Let $\R$ be a right-stable DCTRS such that $\useq(\R)$ is 
confluent and terminating. Then, 
if $u \ras_{\R} v$ ($u, v \in \T$) such
that $v$ is irreducible (w.r.t.~$\R$), then 
there is an innermost derivation $u \ras_{\useq(\R)} v'$
such that $\tb(v') = v$.
\end{proposition}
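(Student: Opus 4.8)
The plan is to exploit that $\useq(\R)$ is convergent (confluent and terminating), so that every term possesses a \emph{unique} normal form which is moreover reachable by innermost rewriting. First I would feed the given conditional derivation into Lemma~\ref{lm_complete}: completeness turns $u \ras_\R v$ into $u \ras_{\useq(\R)} v$. Next, since $\useq(\R)$ is terminating, innermost rewriting started from $u$ cannot continue forever and must end in a normal form, and by confluence this normal form is unique; call it $v'$. This already yields the desired innermost derivation $u \ras_{\useq(\R)} v'$, so the only remaining task is to identify $\tb(v')$ with $v$.

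To establish $\tb(v') = v$ I would \emph{not} argue on the $\R$-side via confluence of $\R$ (which is not assumed), but instead run innermost rewriting from $v$ itself. Since $u \ras_{\useq(\R)} v$ and $v'$ is the unique $\useq(\R)$-normal form of $u$, we also have $v \ras_{\useq(\R)} v'$, and termination together with confluence guarantee that this can be realised by an \emph{innermost} derivation $v \ras_{\useq(\R)} v'$ with $v \in \T$. Now Lemma~\ref{lm_sound_innermost} (soundness of innermost derivations for the right-stable DCTRS $\R$) applies to this derivation and gives $v \ras_\R \tb(v')$. Because $v$ is irreducible w.r.t.~$\R$ by hypothesis, this $\R$-derivation must have length zero, whence $\tb(v') = v$, as required, and the two halves of the argument combine to give the claim.

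The main obstacle, and the point that must be set up carefully, is precisely this last identification. The tempting route is to observe $u \ras_\R v$ and $u \ras_\R \tb(v')$ and to conclude $v = \tb(v')$ from confluence of $\R$ and irreducibility; but confluence of $\R$ is exactly what is not available here (indeed this proposition is meant as a stepping stone towards it in Section~\ref{sec_conf}). The resolution is to apply soundness not to the whole innermost derivation from $u$, but to its tail starting at the \emph{irreducible} original term $v$: soundness of innermost rewriting plus irreducibility of $v$ forces the associated $\R$-reduction to be empty. I would then only have to discharge the routine side conditions of the invoked lemmas, namely that $v \in \T$ (so that Lemma~\ref{lm_sound_innermost} is applicable) and that innermost rewriting on a terminating, confluent TRS delivers, from both $u$ and $v$, one and the same normal form $v'$; both are immediate consequences of convergence.
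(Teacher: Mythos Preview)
Your proposal is correct and follows essentially the same approach as the paper: completeness to get $u \ras_{\useq(\R)} v$, convergence of $\useq(\R)$ to obtain an innermost derivation from $u$ (and from $v$) to the unique normal form $v'$, and then Lemma~\ref{lm_sound_innermost} applied to the innermost tail $v \ras_{\useq(\R)} v'$ together with irreducibility of $v$ to force $\tb(v') = v$. You even anticipate and correctly avoid the pitfall of assuming confluence of $\R$.
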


\begin{proof}
Because of completeness of $\useq$, there is a derivation
$u \ras_{\useq(\R)} v$. By confluence and termination there is 
a unique normal form $w \in \useq(\T)$ of $u$ and $v$ in $\useq(\R)$ and
there is an innermost derivation $u \ras_{\useq(\R)} w$.

Finally,
the assumption that $v$ is a normal form in $\R$ and Lemma~\ref{lm_sound_innermost} imply that
$\tb(v) = w'$ for all $w' \in \useq(\T)$ such that 
$v \ras_{\useq(\R)} w'$.
\end{proof}

Next, we prove soundness for DCTRSs that are transformed into confluent and
terminating TRSs. For this purpose, 
observe that if a
TRS is confluent and terminating, then for every
derivation $u \ras v$ such that $v$ is a normal form there is an innermost 
derivation $u \ras v$. This observation can be combined with
Theorem~\ref{tm_innermost} that states that innermost, normalizing rewrite sequences in some right-stable DCTRS are sound: 

\begin{lemma}[soundness for confluent and terminating TRSs]\label{lm_sound_norm_conf}
	Let $\R$ be a right-stable DCTRS such that $\useq(\R)$ is terminating and confluent
	and let $u \ras_{\useq(\R)} v$ be a normalizing rewrite sequence ($u \in \T$). 
	Then, $u \ras_\R \tb(v)$.
\end{lemma}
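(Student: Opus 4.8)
The plan is to bypass the generality of the given derivation and reduce the claim to the already-proved soundness of innermost rewriting (Lemma~\ref{lm_sound_innermost}). The one piece of work is to show that the specific normalizing derivation $u \ras_{\useq(\R)} v$ can be replaced by an \emph{innermost} derivation ending in the very same term $v$; once that is available, Lemma~\ref{lm_sound_innermost} immediately yields $u \ras_\R \tb(v)$.

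To produce that innermost derivation I would proceed as follows. Since $\useq(\R)$ is terminating, every reduction sequence is finite, so in particular any maximal innermost reduction starting from $u$ halts. Because an innermost-maximal term has no innermost redex and hence no redex at all, this reduction ends in a genuine normal form $w$, giving an innermost derivation $u \ras_{\useq(\R)} w$. Now I would invoke confluence: a terminating and confluent TRS has unique normal forms, and the given derivation $u \ras_{\useq(\R)} v$ ends in the normal form $v$. Thus $v$ and $w$ are both normal forms of $u$, so $v = w$, and the constructed innermost derivation is in fact $u \ras_{\useq(\R)} v$.

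With an innermost derivation $u \ras_{\useq(\R)} v$ and $u \in \T$ in hand, applying Lemma~\ref{lm_sound_innermost} directly gives $u \ras_\R \tb(v)$, which is the desired conclusion.

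The only genuinely load-bearing step is the existence of the innermost derivation reaching $v$, and I expect it to be the main (though mild) obstacle. It rests on two standard facts that must be invoked carefully: that termination of $\useq(\R)$ forces every innermost reduction to halt at an honest normal form (no innermost-maximal term can still be reducible), and that termination together with confluence forces normal forms to be unique, so the innermost normal form of $u$ can only be the already-given $v$. Note that completeness of $\useq$ is not needed here, since we start from a derivation already living in $\useq(\R)$ rather than in $\R$; all the work is on the unraveled side.
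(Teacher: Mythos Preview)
Your proposal is correct and matches the paper's proof essentially step for step: the paper also observes that termination and confluence of $\useq(\R)$ together with $v$ being a normal form yield an innermost derivation $u \ras_{\useq(\R)} v$, and then applies Lemma~\ref{lm_sound_innermost}. You simply spell out in more detail the standard argument (termination gives an innermost normal form, confluence forces it to equal $v$) that the paper leaves implicit.
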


\begin{proof}
	$\useq(\R)$ is terminating and confluent, and $v$ is a normal form
	in the derivation $u \ras_{\useq(\R)} v$. Therefore, there is an innermost derivation
	$u \ras_{\useq(\R)} v$. By Lemma~\ref{lm_sound_innermost} this implies $u \ras_\R \tb(v)$.
\end{proof}

\begin{theorem}[soundness for normalizing rewrite sequences]\label{tm_sound_norm}
	Let $\R$ be a right-stable DCTRS such that $\useq(\R)$ is confluent and
	terminating. Then $\useq$ is sound for reductions to normal forms.
\end{theorem}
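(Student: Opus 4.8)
The plan is to read the theorem as an immediate corollary of Lemma~\ref{lm_sound_norm_conf}, since all of the substantive work has already been carried out in the preceding chain of lemmas. A reduction to a normal form is, by the definition in Section~\ref{sec_prelim}, exactly a normalizing rewrite sequence, i.e.~a derivation $u \ras_{\useq(\R)} v$ whose final term $v$ is a normal form of $\useq(\R)$; and a rewrite sequence $u \ras_{\useq(\R)} v$ with $u \in \T$ is \emph{sound} precisely when $u \ras_\R \tb(v)$. Therefore, to establish that $\useq$ is sound for reductions to normal forms, I only need to verify that every normalizing rewrite sequence starting from an original term satisfies $u \ras_\R \tb(v)$.

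First I would fix an arbitrary such sequence $u \ras_{\useq(\R)} v$ with $u \in \T$ and $v$ a normal form of $\useq(\R)$. Because $\R$ is right-stable and $\useq(\R)$ is confluent and terminating, Lemma~\ref{lm_sound_norm_conf} applies verbatim and yields $u \ras_\R \tb(v)$, which is exactly the assertion that the sequence is sound. If one also wishes to recover the classical formulation in which both endpoints are original terms, it suffices to observe that $\tb$ is the identity on $\T$: for a term without U-symbols only the first two clauses of the definition of $\tb$ ever apply, and they recurse through $\F$-symbols without invoking the U-term clause, so $\tb(v) = v$ and the conclusion reads $u \ras_\R v$.

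I do not expect any genuine obstacle, as the theorem merely repackages Lemma~\ref{lm_sound_norm_conf} in the vocabulary of soundness. The only subtlety worth flagging is purely terminological: one must confirm that the ``normal form'' in ``reductions to normal forms'' is taken with respect to $\useq(\R)$ rather than with respect to $\R$, so that the hypothesis of Lemma~\ref{lm_sound_norm_conf} is met. This is the intended reading and follows at once from how normalizing sequences were defined, so the theorem follows with no further computation. Had the normal form instead been meant in $\R$, one would first extend the derivation to a $\useq(\R)$-normal form using termination and then argue that this added tail cannot alter the back-translation of an already original, $\R$-irreducible term; but that extra step is unnecessary under the natural reading.
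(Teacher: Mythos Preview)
Your proposal is correct and matches the paper's own proof, which simply states ``Straightforward from Lemma~\ref{lm_sound_norm_conf}.'' You have merely unpacked the definitions a bit more than the paper does, but the argument is identical.
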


\begin{proof}
	Straightforward from Lemma~\ref{lm_sound_norm_conf}.
\end{proof}

The previous theorem is interesting because it shows that 
\cite[Theorem 9]{rta12-gmeiner-et-al} (soundness for reductions to normal forms
of confluent DCTRSs), also holds if only the transformed TRS is known to be confluent.

\section{Confluence of Conditional Term Rewrite Systems}\label{sec_conf}

Our goal is to prove that if $\useq(\R)$ is confluent, then also
$\R$ is confluent. For this purpose we introduce another soundness property, 
soundness for joinability.

\begin{definition}[soundness for joinability]
	An unraveling $\U$ is sound for joinability for a CTRS $\R$ if
	for all terms $u, v \in \T$ such that $u \join_{\U(\R)} v$
	also $u \join_\R v$.
\end{definition}

Soundness for joinability is important in connection with confluence
because it allows us to prove confluence of a DCTRS via confluence of the transformed
TRS.

There is an important connection between soundness for joinability and confluence.

\begin{lemma}[soundness for joinability and confluence]\label{lm_join_conf}
Let $\R$ be a CTRS such that $\useq(\R)$ is confluent
and $\useq$ is sound for joinability, then $\R$ is confluent.
\end{lemma}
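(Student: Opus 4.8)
The plan is to take two arbitrary reducts of a common term in the CTRS and lift them into the transformed TRS, where confluence gives a common reduct, then pull that common reduct back down into the CTRS using soundness for joinability.

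First I would let $u, v, w \in \T$ be original terms with $u \las_\R v \ras_\R w$; the goal is to produce a common reduct of $u$ and $w$ in $\R$. Since confluence of a CTRS is equivalent to the statement that $u \las_\R v \ras_\R w$ implies $u \join_\R w$, it suffices to establish joinability of $u$ and $w$ in $\R$. By completeness of $\useq$ (Lemma~\ref{lm_complete}), the CTRS derivations $v \ras_\R u$ and $v \ras_\R w$ lift to derivations $v \ras_{\useq(\R)} u$ and $v \ras_{\useq(\R)} w$ in the transformed TRS; note that $u, v, w$ are all original terms, so they are legitimate endpoints in both systems.

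Next I would apply confluence of $\useq(\R)$ to the peak $u \las_{\useq(\R)} v \ras_{\useq(\R)} w$ to obtain a term $z \in \useq(\T)$ with $u \ras_{\useq(\R)} z \las_{\useq(\R)} w$. This $z$ is in general a mixed term, so it has no direct meaning in the CTRS; this is exactly where the back-translation $\tb$ and the soundness-for-joinability hypothesis come into play. Since $u$ and $w$ are original terms with $u \join_{\useq(\R)} w$ (witnessed by $z$), soundness for joinability of $\useq$ yields $u \join_\R w$ directly, i.e.\ there is a term $z' \in \T$ with $u \ras_\R z' \las_\R w$. This closes the peak in $\R$, so $\R$ is confluent.

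The main obstacle, and the only genuinely delicate point, is the bookkeeping that $u \join_{\useq(\R)} w$ really is an instance of the hypothesis: the definition of soundness for joinability requires both endpoints to lie in $\T$, which holds here, and the joinability to be witnessed in $\useq(\R)$, which is precisely what confluence produces. One might worry that $z'$ from soundness for joinability need not equal $\tb(z)$, but that is irrelevant—the definition already asserts existence of a common reduct in $\R$, so no explicit back-translation computation is needed at this final step. The argument is therefore short: the conceptual work lives in completeness (already proved) and in the soundness-for-joinability hypothesis (assumed), and this lemma merely composes them through confluence of the transformed system.
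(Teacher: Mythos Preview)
Your proof is correct and follows essentially the same route as the paper's: lift via completeness (Lemma~\ref{lm_complete}), join via confluence of $\useq(\R)$, and pull back via soundness for joinability. The only cosmetic difference is that the paper phrases confluence through the Church--Rosser property ($u \leftrightarrow^*_\R v \Rightarrow u \join_\R v$) rather than through closing a peak $u \las_\R v \ras_\R w$, but the argument is identical.
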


\begin{proof}
Consider two terms $u, v \in \T$ such that $u \leftrightarrow^*_\R v$.
Since $\useq$ is complete by Lemma~\ref{lm_complete}, $u \leftrightarrow^*_{\useq(\R)} v$. 
$\useq(\R)$ is confluent so that $u \join_{\useq(\R)} v$.
By soundness for joinability this implies $u \join_\R v$.
\end{proof}

It remains to prove soundness for joinability of right-stable DCTRSs for which
the transformed TRS is confluent. 
Theorem~\ref{tm_sound_norm} shows that confluence and termination of the transformed
TRS imples soundness for normalizing derivations. Since every term is terminating
this implies soundness for joinability:

\begin{lemma}[soundness for joinability]\label{lm_sound_join}
Let $\R$ be a right-stable DCTRS such that $\useq(\R)$ is confluent and 
terminating, and let $u \join_{\useq(\R)} v$ ($u, v \in \T$), then 
$u \join_\R v$.
\end{lemma}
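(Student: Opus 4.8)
The plan is to reduce the joinability claim to the normalization result of Theorem~\ref{tm_sound_norm}, which I may freely assume. Suppose $u \join_{\useq(\R)} v$ with $u, v \in \T$; by definition this means there is a common reduct $w \in \useq(\T)$ such that $u \ras_{\useq(\R)} w$ and $v \ras_{\useq(\R)} w$. I do not yet know that $w$ is irreducible, so the first step is to normalize it: since $\useq(\R)$ is terminating, there is a normal form $w'$ with $w \ras_{\useq(\R)} w'$. By composing, both $u \ras_{\useq(\R)} w'$ and $v \ras_{\useq(\R)} w'$ are then rewrite sequences ending in the normal form $w'$, i.e.\ they are normalizing.

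Next I would apply Theorem~\ref{tm_sound_norm} (equivalently Lemma~\ref{lm_sound_norm_conf}) to each of these two normalizing sequences. Because $\R$ is a right-stable DCTRS and $\useq(\R)$ is confluent and terminating, soundness for reductions to normal forms gives $u \ras_\R \tb(w')$ and $v \ras_\R \tb(w')$. The crucial point is that both sequences are translated against \emph{the same} back-translated term $\tb(w')$, so $\tb(w')$ is a common reduct of $u$ and $v$ in the original CTRS. Hence $u \join_\R v$, which is exactly the desired conclusion.

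The main subtlety, and the step I would be most careful about, is ensuring that the two rewrite sequences really share a \emph{common} target in $\useq(\T)$ before applying soundness. Confluence of $\useq(\R)$ together with termination guarantees that $u$ and $v$, being joinable, have a \emph{unique} normal form $w'$; this is what licenses normalizing both sides to the \emph{same} $w'$ rather than to two possibly distinct normal forms. Without uniqueness, the two applications of Theorem~\ref{tm_sound_norm} might yield $\tb(w_1')$ and $\tb(w_2')$ with no guarantee of equality, and the argument would collapse. So the essential ingredient is to invoke confluence of $\useq(\R)$ to collapse the common reduct $w$ to its unique normal form $w'$, and only then translate. Given this, the proof is a short composition of the normalization construction with Theorem~\ref{tm_sound_norm}, and I expect no further obstacles beyond recording these bookkeeping steps carefully.
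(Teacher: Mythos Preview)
Your proof is correct and is essentially the paper's own argument: take a common reduct in $\useq(\R)$, normalize it using termination, and apply Lemma~\ref{lm_sound_norm_conf} to both sides to obtain the common $\R$-reduct $\tb(w')$. One minor point: the worry in your third paragraph is misplaced, since $w'$ is obtained by normalizing the \emph{already shared} reduct $w$, so both sides reach the same $w'$ automatically without invoking confluence; confluence is genuinely needed, but only as a hypothesis of Lemma~\ref{lm_sound_norm_conf} (to guarantee that an innermost derivation from $u$ reaches the given normal form rather than some other one).
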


\begin{proof}
Since $\useq(\R)$ is confluent and terminating, $u \ras_{\useq(\R)} v$
implies that there is an irreducible term $w \in \useq(\T)$ such that
$u \ras_{\useq(\R)} w \las_{\useq(\R)} v$. Since $\R$ is
right-stable, Lemma~\ref{lm_sound_norm_conf} implies
$u \ras_\R \tb(w) \las_\R v$. 
\end{proof}

Thus we obtain our main result:

\begin{theorem}[soundness for confluence]\label{tm_conf}
Let $\R$ be a right-stable DCTRS such that $\useq(\R)$ is
confluent and terminating, then $\R$ is confluent.
\end{theorem}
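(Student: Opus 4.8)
The plan is to obtain the theorem as an immediate corollary of the two preceding lemmas, since the genuine technical content has already been spent in establishing \emph{soundness for joinability}. The argument factors cleanly through that intermediate property: once one knows that joinability in $\useq(\R)$ entails joinability in $\R$ for original terms, confluence of $\R$ falls out from confluence of $\useq(\R)$ together with completeness of the unraveling. So the whole proof is a matter of threading Lemma~\ref{lm_sound_join} into Lemma~\ref{lm_join_conf}.

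First I would invoke Lemma~\ref{lm_sound_join} to conclude that $\useq$ is sound for joinability for $\R$. Its hypotheses — right-stability of $\R$, together with confluence and termination of $\useq(\R)$ — are exactly the hypotheses of the theorem, so nothing extra must be verified. This is the step that packages the heavy machinery, namely the chain running from the technical key lemma (Lemma~\ref{lm_almost_ueager_sound}) through soundness of almost U-eager derivations (Lemma~\ref{lm_sound_eager}), the innermost-to-almost-U-eager translation (Lemma~\ref{lm_innermost_ueager}), and finally soundness for normalizing sequences (Lemma~\ref{lm_sound_norm_conf}). The essential idea compressed there is that termination plus confluence force every joinability diagram to close at the \emph{unique} normal form, which is reachable by an innermost — hence almost U-eager, hence sound — derivation from each side.

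Second I would apply Lemma~\ref{lm_join_conf}. Given arbitrary $u, v \in \T$ with $u \leftrightarrow^*_\R v$, completeness of $\useq$ (Lemma~\ref{lm_complete}) lifts this to $u \leftrightarrow^*_{\useq(\R)} v$; confluence of $\useq(\R)$ then yields $u \join_{\useq(\R)} v$; and soundness for joinability brings the conclusion back down to $u \join_\R v$. Since $u$ and $v$ were arbitrary convertible original terms, $\R$ is confluent, which is precisely the claim.

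I do not expect a real obstacle at the level of this final theorem. The only point deserving a moment of care is the quantifier over original terms: confluence of $\R$ is a statement about terms in $\T$, and both completeness and soundness for joinability are stated exactly for such terms, so the conversion never leaves $\T$ at its endpoints even though the intermediate reductions in $\useq(\R)$ pass through mixed terms in $\useq(\T)$. The substantive difficulty of the development lives entirely upstream — in showing that almost U-eager (and therefore innermost, and therefore normalizing) derivations are sound for right-stable DCTRSs — and by the time this theorem is reached that work is already done.
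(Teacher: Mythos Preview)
Your proposal is correct and matches the paper's proof exactly: the paper also derives the theorem by invoking Lemma~\ref{lm_sound_join} to obtain soundness for joinability and then applying Lemma~\ref{lm_join_conf} together with confluence of $\useq(\R)$. Your additional remarks merely unfold what those lemmas already package, so there is no substantive difference.
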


\begin{proof}
By Lemma~\ref{lm_sound_join}, $\useq$ is sound for joinability
for $\R$. Since $\useq(\R)$ is confluent, Lemma~\ref{lm_join_conf} implies
that $\R$ is confluent.
\end{proof}

This confluence result is remarkable because it also holds for CTRSs for which 
$\useq$ is unsound like the CTRS of Example~\ref{ex_unsound_conf}. 

\begin{example}[unsound confluent CTRS]
Let us recall the right-stable DCTRS of Example~\ref{ex_unsound_conf} and its
transformed TRS.
\[
\R = \set{
	\trs{
		a &\to c \ra e    \\[-0.5em]
		&\nesearrow \;\;\;\, \nearrow     \\[-0.5em]
		b &\to d  \\
		k& \ra e \\[-0.5em]
		& \nearrow \\[-0.5em]
		l& \\
		s(c)& \ra t(k) \\[-0.5em]
		& \searrow \\[-0.5em]
		& \mathrel{\quad} t(l) \\
		s(e) &\ra t(e) \\
		g(x,x) &\ra h(x,x) \\
		f(x) &\to \tup{x, y} \Leftarrow s(x) \ras t(y)
	}} \qquad
	\useq(\R) = \set{
		\trs{
		a &\to c \ra e    \\[-0.5em]
		&\nesearrow \;\;\;\, \nearrow     \\[-0.5em]
		b &\to d  \\
		k& \ra e \\[-0.5em]
		& \nearrow \\[-0.5em]
		l& \\
		s(c)& \ra t(k) \\[-0.5em]
		& \searrow \\[-0.5em]
		& \mathrel{\quad} t(l) \\
		s(e) &\ra t(e) \\
		g(x,x) &\to h(x,x) \\
		f(x) &\to U^\alpha_1(s(x), x) \\
		U^\alpha_1(t(y), x) &\to \tup{x, y}
		}}
	\]

	The transformed TRS $\useq(\R)$ is confluent because it is 
	terminating and all critical pairs are joinable. Therefore,
	by Theorem~\ref{tm_conf}, $\R$ is also confluent.
\end{example}

Although termination of the transformed TRS seems to be a major limitation, 
\cite{iwc2013} proves that for an unraveling similar to $\useq$, 
(weakly-)left-linearity (which implies soundness) and
confluence of the transformed TRS implies confluence of the original CTRS. 
Currently it is not known whether Theorem~\ref{tm_conf}
also holds for DCTRSs that are transformed into non-terminating and non-left-linear TRSs.

\section{Conclusion}\label{sec_conc}

\subsection{Summary}

Transformations have been used as a tool to prove termination and confluence
of conditional term rewrite systems for a long time.
For confluence the problem is that 
the rewrite relation of the transformed system may give
rise to rewrite sequences that are not possible in the original system, 
\ie the transformation may not be sound. 

We use the so-called sequential unraveling, a simple transformation for
deterministic CTRSs that was introduced in \cite{flops99-ohlebusch} based
on \cite{techrep97-marchiori}.

Recent results (\eg in \cite{iwc2013}) show that confluence of the transformed
system (using the sequential unraveling) 
implies confluence of the original system if the transformation is sound.
There are many syntactic restrictions like (weak) left-linearity that
imply soundness, yet, for non-left-linear CTRSs for which
the transformation is not sound there are no such results yet.
Lemma~\ref{lm_sound_norm_conf} shows that if the transformed system
is terminating and confluent, normalizing
derivations are always sound. This 
result is interesting because a similar result was shown in \cite{rta12-gmeiner-et-al} for confluent CTRSs. 
 
This lemma holds because innermost rewrite sequences
in the transformed system are always sound (Theorem~\ref{tm_innermost}). 
Since soundness for normalizing derivations
implies soundness for joinability (which
implies soundness for confluence)
we finally can show that 
a right-stable, deterministic CTRS is confluent if the transformed TRS is
confluent and terminating  (Theorem~\ref{tm_conf}). 

It should be pointed out that it is not yet known whether termination
is really needed in this result. If there is a counterexample for this 
we know that it must be non-left-linear, non-terminating
and confluent.

\subsection{Related Work and Perspectives}

In \cite{iwc2013}, we presented a confluence criterion for CTRSs based on 
soundness and confluence of the transformed system for an unraveling
similar to $\useq$. 
\cite{iwc2014} contains a similar result for the
structure preserving transformation of \cite{rta06-serbanuta-rosu}.

Yet all these results have in common that they require some syntactic criterion
like (weakly) left-linearity of the CTRS that implies soundness. 
Theorem~\ref{tm_conf} is a significant improvement to these results because
it is also applicable to non-linear 
CTRSs for which the transformation is unsound. 

There are many confluence results for CTRSs in the literature and one similar result
is \cite[Theorem 4.1]{lpar94-avenhaus-loria-saenz}, stating that 
every strongly deterministic TRS that is quasi-reductive and has joinable critical
pairs is confluent. This result does not use
transformations but it can be seen that critical pairs in the CTRS correspond
to one or more critical pairs in the transformed system while
termination of the transformed TRS implies quasi-reductiveness 
\cite{wst07-schernhammer-gramlich}. Hence,
it subsumes Theorem~\ref{tm_conf}.

Yet, Theorem~\ref{tm_conf} has some advantages over 
\cite[Theorem 4.1]{lpar94-avenhaus-loria-saenz}. In particular,
it does not use the framework of conditional rewriting. Checking
for joinability of terms in CTRSs is easier in the transformed
unconditional TRS which is important for automated
confluence proofs.

The main result does not extend any previous results but rather
is a novel approach to prove confluence. It uses a
simple transformation and a very general proof structure.
Hence, the result might be improved in the future \eg by 
relaxing the requirements for confluence or termination. Termination 
is only needed for
two purposes: To show that for every normalizing rewrite 
sequence there is also an innermost rewrite sequence,
and to prove that soundness for normalizing rewrite sequences
implies soundness for joinability.

Finally, adapting the result to
more complex transformations that have better properties
towards preserving confluence (in particular 
\emph{structure-preserving transformations}, most notably the transformations 
of \cite{ppdp03-antoy-brassel-hanus} and its extension to DCTRSs in 
\cite{wpte14}) might improve this result further. \\[1em]

\noindent \textbf{Acknowledgements:} I am grateful to the anonymous reviewers for their detailed comments on this paper and an earlier version of it.

\bibliographystyle{eptcs}
\bibliography{references}

\end{document}